\title{Minimax principles, Hardy-Dirac inequalities and operator cores for two and three dimensional Coulomb-Dirac operators}
\author{David M\"uller\footnote{Mathematisches Institut, Ludwig-Maximilians-Universit\"at M\"unchen, Theresienstr. 39, 80333 Munich, Germany \newline $dmueller@math.lmu.de$}}
\date{}
\newtheorem{thm}{Theorem}
\newtheorem{lem}[thm]{Lemma}
\newtheorem{cor}[thm]{Corollary}
\newtheorem{rem}[thm]{Remark}
\DeclareMathOperator{\Span}{span}
\DeclareMathOperator{\diag}{diag}
\begin{document}

\maketitle

\begin{abstract}
For $n\in\{2,3\}$ we prove minimax characterisations of eigenvalues in the gap of the $n$ dimensional Dirac operator with an potential, which may have a Coulomb singularity with a coupling constant up to the critical value $1/(4-n)$.  This result implies a so-called Hardy-Dirac inequality, which can be used to define a distinguished self-adjoint extension of the Coulomb-Dirac operator defined on  $\mathsf{C}_{0}^{\infty}(\mathbb{R}^n\setminus\{0\};\mathbb{C}^{2(n-1)})$, as long as the coupling constant does not exceed $1/(4-n)$. We also find an explicit description of an operator core of this operator.
\end{abstract}

\section{Introduction}
Throughout the text we assume that $n\in \{2,3\}$. In relativistic quantum mechanics an electron is described in $n$ dimensions by a $2(n-1)$ component spinor.  We say that a $2(n-1)\times 2(n-1)$ hermitian matrix function  $V$ on $\mathbb{R}^n$ is in $\mathfrak{P}_{n}$ if for some $\nu\in[0,1/(4-n))$ the inequality  $0\geq V\geq -\nu/|\cdot|\otimes \mathbb{I}_{\mathbb{C}^{2(n-1)}}$ holds and that $V$ belongs to $\overline{\mathfrak{P}}_{n}$ if  $0\geq V\geq -1/\big((4-n)|\cdot|\big)\otimes \mathbb{I}_{\mathbb{C}^{2(n-1)}}$ holds.\\
For $V\in\mathfrak{P}_n$ we denote by $D_n(V)$ the unique self-adjoint extension of 
\begin{align} \label{dirac_operator_symbol}
 \tilde{D}_n(V):=\begin{cases}&-\mathrm{i}\boldsymbol{\sigma}\cdot \nabla + \sigma_3 + V\text{ if }n=2 \\ &-\mathrm{i}\boldsymbol{\alpha}\cdot \nabla + \beta + V
 \text{ if }n=3
 \end{cases}
 \text{ defined on }\mathsf{C}_{0}^{\infty}(\mathbb{R}^n\setminus\{0\};\mathbb{C}^{2(n-1)})
,
\end{align}
with the property $\mathfrak{D}\big(D_n (V)\big)\subset \mathsf{H}^{1/2}(\mathbb{R}^n;
\mathbb{C}^{2(n-1)})$. The existence of this distinguished self-adjoint extension is proven in Section \ref{Section_Talman}. There 
we apply some general results developed in \cite{Nenciu}. In \eqref{dirac_operator_symbol} are $ \boldsymbol{\sigma}=(\sigma_1, \sigma_2),\boldsymbol{\alpha}=(\alpha_1,\alpha_2,\alpha_3)$  vectors; $\sigma_1,\sigma_2,\sigma_3$ the standard Pauli matrices; $\boldsymbol{\alpha}_{i}=\begin{pmatrix}
 0_{\mathbb{C}^2} & \sigma_{i} \\ \sigma_{i} & 0_{\mathbb{C}^2}
\end{pmatrix} $
for $i\in\{1,2,3\}$ and $\beta=\begin{pmatrix} \mathbb{I}_{\mathbb{C}^2} &  0_{\mathbb{C}^2} \\  0_{\mathbb{C}^2} &  - \mathbb{I}_{\mathbb{C}^2} \end{pmatrix} $. As in Proposition 1 in \cite{Cuenin-Siedentop} one can prove that there is a gap in  the essential spectrum of $D_n(V)$. To be more precise
\begin{align*}
                     \sigma_{\text{ess}}\big(D_n(V)\big)=(-\infty,-1]\cup[1,\infty).
\end{align*}  
In 1986 James D. Talman proposed in \cite{Talman} a formal minimax characterisation of the lowest eigenvalue in the gap of the essential spectrum of the operator $D_{3}(V)$. In this work we prove a minimax characterisation of eigenvalues in the gap of $D_{3}(V)$ in the spirit of Talman and an analogous result for $D_{2}(V)$. The exact result is:
\begin{thm}[Talman minimax principle]\label{Theorem-Talman-minimax-principle}
Let $V\in\mathfrak{P}_{n}$. If the $k^{\mathrm{th}}$ eigenvalue $\mu_k$ of $D_n(V)$ in $(-1,1)$, counted from below with multiplicity, exists, then it is given by 
 \begin{equation*}
 \mu_k =\inf\limits_{\substack{\mathfrak{M}\subset \mathsf{H}^{1/2}(\mathbb{R}^n;\mathbb{C}^{n-1}) \\ \dim \mathfrak{M}=k }}
 \sup\limits_{\psi \in (\mathfrak{M}\oplus  \mathsf{H}^{1/2}(\mathbb{R}^n;\mathbb{C}^{n-1}))\setminus\{0\}}
 \frac{\mathtt{d}_{n}[\psi]+\mathtt{v}[\psi]}{\|\psi\|^2}.
\end{equation*}  
Here $\mathtt{d}_n$ and $\mathtt{v}$ are the quadratic forms associated to the operators $D_{n}(0)$ and $V$. 
\end{thm}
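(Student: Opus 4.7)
The plan is to derive Theorem~\ref{Theorem-Talman-minimax-principle} from an abstract minimax principle for self-adjoint operators with a spectral gap (in the spirit of Dolbeault--Esteban--S\'er\'e or Griesemer--Siedentop), applied to the Talman splitting of the spinor space. Concretely, write $\mathsf{H}^{1/2}(\mathbb{R}^n;\mathbb{C}^{2(n-1)}) = F_+ \oplus F_-$ where $F_\pm$ are copies of $\mathsf{H}^{1/2}(\mathbb{R}^n;\mathbb{C}^{n-1})$ identified with the upper and lower spinor components. A short direct calculation shows $\mathtt{d}_n[\phi_-] = -\|\phi_-\|^2$ for $\phi_- \in F_-$: the diagonal block $\sigma_3$ (respectively $\beta$) acts as $-\mathbb{I}$ on lower components, while the off-diagonal operators $\boldsymbol{\sigma}\cdot\nabla$ and $\boldsymbol{\alpha}\cdot\nabla$ contribute zero in that diagonal inner product. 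Since $V \le 0$, this yields the uniform bound $(\mathtt{d}_n + \mathtt{v})[\phi_-] \le -\|\phi_-\|^2$, which places the ``threshold'' parameter of the abstract theorem at or below $-1$, safely inside the gap.

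Next I would verify the remaining hypotheses of the abstract theorem: (i) the form $\mathtt{d}_n + \mathtt{v}$ is well-defined on $\mathsf{H}^{1/2}(\mathbb{R}^n;\mathbb{C}^{2(n-1)})$, which follows from $V\in\mathfrak{P}_n$ combined with a Kato--Hardy type inequality controlling $\||\cdot|^{-1/2}\psi\|^2$ by $\langle\psi,|D_n(0)|\psi\rangle$; (ii) for each $\phi_+ \in F_+$ the functional $\phi_-\mapsto(\mathtt{d}_n+\mathtt{v})[\phi_++\phi_-]$ attains a finite maximum on $F_-$, which follows from the coercivity bound of the first paragraph by completing the square against the mixed term; and (iii) the minimax values so produced identify eigenvalues of the distinguished extension $D_n(V)$, not of some other self-adjoint realisation. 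Point (iii) is precisely where the property $\mathfrak{D}(D_n(V))\subset \mathsf{H}^{1/2}$, established in Section~\ref{Section_Talman} via \cite{Nenciu}, enters crucially: it forces the form closure of $\mathtt{d}_n+\mathtt{v}$ on $\mathsf{H}^{1/2}$ to agree with $D_n(V)$, so the abstract theorem's output is precisely $\mu_k$.

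The main obstacle, especially for couplings approaching the critical value $1/(4-n)$, is establishing a quantitative Hardy--Dirac inequality on $F_+$, namely a bound of the shape $(\mathtt{d}_n+\mathtt{v})[\phi_+] \ge -\|\phi_+\|^2$ with a constant uniform over $V\in\mathfrak{P}_n$; without this, the inner supremum over $\mathfrak{M}\oplus F_-$ cannot be bounded below by $-1$ and the abstract scheme collapses. I would attack this by a partial wave decomposition into angular sectors, using the standard spinor spherical harmonics for $n=3$ and the eigenfunctions of $\sigma_3(2L_z+1)$ for $n=2$, reducing the inequality to a family of one-dimensional radial problems whose Hardy-type estimates can be computed explicitly in terms of modified Bessel functions and optimised against the Coulomb coupling. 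Once the radial inequality is established for the extremal constant $\nu = 1/(4-n)$, the case of general $V\in\mathfrak{P}_n$ follows by monotonicity, and the abstract minimax machinery can be applied to conclude.
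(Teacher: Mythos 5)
Your overall framework is the right one: apply an abstract minimax theorem for operators with a gap to the Talman splitting $\mathsf{H}^{1/2}(\mathbb{R}^n;\mathbb{C}^{2(n-1)}) = F_+\oplus F_-$, with $D_n(V)$ defined as a form perturbation of $D_n(0)$ in the sense of Nenciu; this matches the paper's use of Theorem~1 in \cite{Morozov-Mueller} and Lemma~\ref{Lemma_form_perturbation}. Your calculation $\mathtt{d}_n[\phi_-]=-\|\phi_-\|^2$, hence $(\mathtt{d}_n+\mathtt{v})[\phi_-]\le-\|\phi_-\|^2$, is also correct and is indeed the easy half of the hypotheses.

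The genuine gap is the inequality you single out as the main obstacle, $(\mathtt{d}_n+\mathtt{v})[\phi_+]\ge-\|\phi_+\|^2$ for $\phi_+\in F_+$: this is \emph{false} for any $V\in\mathfrak{P}_n$ with a nontrivial Coulomb tail, and no amount of partial-wave or Bessel-function analysis will rescue it. For $\phi_+=\binom{\varphi}{0}$ the off-diagonal part of $D_n(0)$ drops out, so $(\mathtt{d}_n+\mathtt{v})[\phi_+]=\|\varphi\|^2+\langle\varphi,V_{11}\varphi\rangle$, and for $V=-\nu/|\cdot|\otimes\mathbb{I}$ a scaling $\varphi_\varepsilon(\mathbf{x})=\varepsilon^{-n/2}\varphi(\mathbf{x}/\varepsilon)$ keeps $\|\varphi_\varepsilon\|$ fixed while driving $\int|\varphi_\varepsilon|^2/|\mathbf{x}|\to\infty$, so the Rayleigh quotient on $F_+$ alone is unbounded below. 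In other words, taking $\phi_-=0$ in the inner supremum is hopeless; the compensating lower component is essential, not an afterthought. This is exactly what the paper's condition~\emph{(iii)} requires and what it supplies concretely: it builds an operator $L_n:F_+\to F_-$ (defined in \eqref{Definition_c_n}--\eqref{Definition_L} via the momentum-space angular decomposition, with coefficients $c_{n,j}$ tuned to the sharp Kato--Herbst constant $c_n$) and proves in Lemma~\ref{Lemma_Talman} that the Rayleigh quotient on the graph $\binom{\varphi}{L_n\varphi}$ is bounded below by $\frac{c_n^2-1}{c_n^2+1}>-1$, uniformly in $\varphi$ and in $V\in\mathfrak{P}_n$. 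Since the inner $\sup$ over $\phi_-\in F_-$ is at least the value at $\phi_-=L_n\phi_+$, this verifies the needed strict lower bound $>-1$. Your angular-decomposition idea is in the right spirit, but it must be applied to the coupled form on the graph of a well-chosen $L_n$ (where the off-diagonal $K_n$ term and the Coulomb singularity can be traded off channel by channel using the inequalities \eqref{inequalities} for the Legendre functions $Q_j$), not to the form on $F_+$ alone, where there is nothing to trade against.
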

About Theorem \ref{Theorem-Talman-minimax-principle} we want to remark that for $n=3$ 
there is an historical overview of results of the same type in \cite{Morozov-Mueller} 
and that for $n=2$ there is no comparable result known. Moreover, Theorem \ref{Theorem-Talman-minimax-principle} improves in the three dimensional case Theorem 3 in 
\cite{Morozov-Mueller}, which is the best known result  for a Dirac operator with an electrostatic potential having strong Coulomb singularity.\\
Furthermore, we give a different proof of the Esteban-Séré minimax principle (see Theorem 2 in \cite{Morozov-Mueller} and \cite{Esteban-Sere}) and prove an analogous result for two dimensional Dirac operators:
\begin{thm}[Esteban-Séré minimax principle] \label{Theorem_Esteban_Sere}
Let $V\in\mathfrak{P}_{n}$. If the $k^{\mathrm{th}}$ eigenvalue $\mu_k$ of $D_n(V)$ in $(-1,1)$, counted from below with multiplicity, exists, then it is given by
\begin{align*} 
 \mu_k =\inf\limits_{\substack{\mathfrak{M}\subset P^{+}_n\mathsf{H}^{1/2}(\mathbb{R}^n;\mathbb{C}^{2(n-1)}) \\ \dim \mathfrak{M}=k }}
 \sup\limits_{\psi \in (\mathfrak{M}\oplus  P^{-}_n\mathsf{H}^{1/2}(\mathbb{R}^n;\mathbb{C}^{2(n-1)}))\setminus\{0\}}
 \frac{\mathtt{d}_{n}[\psi]+\mathtt{v}[\psi]}{\|\psi\|^2}.
\end{align*}
Here $P^{+}_n$ is the projector on the non-negative spectral subspace of $D_n(0)$ and $P^{-}_n:=\mathbb{I}-P^{+}_n$.
\end{thm}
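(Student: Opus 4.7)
The plan is to deduce Theorem \ref{Theorem_Esteban_Sere} from an abstract minimax principle in a spectral gap, in the style of Dolbeault-Esteban-Séré or Griesemer-Siedentop, applied to the quadratic form $\mathtt{d}_n+\mathtt{v}$ on $\mathsf{H}^{1/2}(\mathbb{R}^n;\mathbb{C}^{2(n-1)})$ with the splitting furnished by the spectral projectors of the free Dirac operator,
\[
\mathsf{H}^{1/2}(\mathbb{R}^n;\mathbb{C}^{2(n-1)}) = P_n^+\mathsf{H}^{1/2}(\mathbb{R}^n;\mathbb{C}^{2(n-1)}) \oplus P_n^-\mathsf{H}^{1/2}(\mathbb{R}^n;\mathbb{C}^{2(n-1)}).
\]
I would first check that this decomposition really makes sense at the level of the form domain. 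The projectors $P_n^{\pm}$ are Fourier multipliers by smooth, uniformly bounded matrix symbols (namely $\tfrac12(\mathbb{I}\pm(\boldsymbol{\sigma}\cdot p+\sigma_3)/\sqrt{|p|^2+1})$ for $n=2$ and the analogous expression with $\boldsymbol{\alpha}$ and $\beta$ for $n=3$); they therefore commute with $(1-\Delta)^{s/2}$ for every $s\in\mathbb{R}$, and in particular are bounded on $\mathsf{H}^{1/2}$, so the displayed direct-sum decomposition is valid as a topological splitting of the form domain.

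Next I would verify the crucial spectral-gap hypothesis of the abstract principle. By the spectral theorem applied to $D_n(0)$, every $\psi\in P_n^-\mathsf{H}^{1/2}\setminus\{0\}$ satisfies
\[
\mathtt{d}_n[\psi]=\langle\psi,D_n(0)\psi\rangle\leq -\|\psi\|^2,
\]
since $\operatorname{ran}P_n^-$ is the spectral subspace of $D_n(0)$ associated with $(-\infty,-1]$. Combined with $\mathtt{v}[\psi]\leq 0$ (which follows from $V\leq 0$ built into the definition of $\mathfrak{P}_n$), this gives
\[
\sup_{\psi\in P_n^-\mathsf{H}^{1/2}\setminus\{0\}}\frac{\mathtt{d}_n[\psi]+\mathtt{v}[\psi]}{\|\psi\|^2}\leq -1,
\]
which lies strictly below the lower edge $1$ of the upper branch of $\sigma_{\mathrm{ess}}(D_n(V))=(-\infty,-1]\cup[1,\infty)$. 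The abstract minimax theorem then produces an increasing sequence of inf-sup values in $(-1,1)$ that coincides with the sequence of eigenvalues of $D_n(V)$ in $(-1,1)$, counted with multiplicity; this is exactly the content of Theorem \ref{Theorem_Esteban_Sere}.

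The main obstacle I anticipate is not the verification of the splitting or of the inequality above, both of which are straightforward, but the functional-analytic prerequisites for invoking the abstract principle: one must know that $\mathtt{d}_n+\mathtt{v}$ is closed and semibounded on $\mathsf{H}^{1/2}$ and, more delicately, that the self-adjoint operator it represents coincides with the distinguished extension $D_n(V)$ obtained via Nenciu's construction. For $V\in\mathfrak{P}_n$ this identification rests on the Hardy-Dirac inequality controlling $\mathtt{v}$ on $P_n^+\mathsf{H}^{1/2}$ by $\mathtt{d}_n$, which is the technical backbone the paper develops alongside Theorem \ref{Theorem-Talman-minimax-principle}. Once that identification is in place, the abstract principle applies essentially verbatim, and no further ingredient specific to the Esteban-Séré splitting is required.
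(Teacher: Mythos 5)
Your verification that $\sup_{\psi\in P_n^-\mathsf{H}^{1/2}\setminus\{0\}}(\mathtt{d}_n[\psi]+\mathtt{v}[\psi])/\|\psi\|^2\leq -1$ is correct, but this is the \emph{easy} hypothesis of the abstract minimax theorems you cite (it is condition (ii), or part of it, in Dolbeault--Esteban--S\'er\'e and in Theorem~1 of \cite{Morozov-Mueller}). You never address the genuinely hard condition, namely (iii): that the first inf--sup level itself lies strictly above $-1$, equivalently that one can exhibit a map $G_n:P_n^+\mathsf{H}^{1/2}\to P_n^-\mathsf{H}^{1/2}$ with
\begin{align*}
\inf_{\varphi\in P_n^+\mathsf{H}^{1/2}\setminus\{0\}}\frac{\mathtt{d}_n[\varphi+G_n\varphi]+\mathtt{v}[\varphi+G_n\varphi]}{\|\varphi+G_n\varphi\|^2}>-1.
\end{align*}
Without this, the abstract theorem does not conclude that the inf--sup values are the eigenvalues in the gap; the sentence ``the abstract minimax theorem then produces an increasing sequence\ldots'' is a non-sequitur from what you have verified. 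For a potential with a Coulomb singularity up to the critical coupling, condition (iii) is exactly where all the analytic work lies, and it cannot be dispensed with.

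The paper handles (iii) for the Esteban--S\'er\'e splitting by a transfer argument rather than a direct estimate: it defines $G_n$ via an explicit Fourier-multiplier $E_n$ in the partial-wave decomposition (Relations \eqref{Definition_G}--\eqref{Definition_E}) and proves in Lemma~\ref{Lemma_Esteban_Sere} that $L_n(\varphi+G_n\varphi)_1=(\varphi+G_n\varphi)_2$, i.e.\ the graph of $G_n$ is carried into the graph of the operator $L_n$ built in the Talman proof; then Lemma~\ref{Lemma_Talman} supplies the lower bound $(c_n^2-1)/(c_n^2+1)>-1$. This reduction is precisely the ``further ingredient specific to the Esteban--S\'er\'e splitting'' that you claim is not needed. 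Separately, your diagnosis of the ``main obstacle'' is off: the form $\mathtt{d}_n+\mathtt{v}$ is indefinite, not semibounded; the identification with the distinguished extension is what Lemma~\ref{Lemma_form_perturbation} establishes through Nenciu's form-perturbation theory (Herbst plus Kato's bound), and the Hardy--Dirac inequality (Theorem~\ref{Theorem-Hardy-Dirac}) is a \emph{consequence} of the Talman minimax principle in this paper, not an input to either minimax proof.
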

A direct application of Theorem \ref{Theorem-Talman-minimax-principle} is:
\begin{thm}[Hardy-Dirac inequality]\label{Theorem-Hardy-Dirac}
Let $v$ be a scalar function on $\mathbb{R}^n$ such that $v\otimes \mathbb{I}_{\mathbb{C}^{2(n-1)}}\in \mathfrak{P}_{n}$. 
Moreover, we define the operator:
\begin{align*}
                 K_n:=\begin{cases} 
                  -\mathrm{i}\partial_1-\partial_2 \text{ if }n=2,\\ -\mathrm{i}\boldsymbol{\sigma}\cdot \nabla
                  \text{ if }n=3,
                 \end{cases}
\end{align*}
and denote by $\lambda(v)$ the smallest eigenvalue of $D_n(v\otimes \mathbb{I}_{\mathbb{C}^{2(n-1)}})$ in the gap $(-1,1)$.
Then for all $ \varphi\in \mathsf{H}^{1}(\mathbb{R}^n;\mathbb{C}^{n-1})$ the inequality
\begin{align}\label{Hardy_Dirac_Inequality}
0\leq
\int\limits_{\mathbb{R}^n}\frac{|K_n\varphi(\mathbf{x})|^2}{1+\lambda(v)-v(\mathbf{x})}\mathrm{d}\mathbf{x}+ \int\limits_{\mathbb{R}^n}\big(1-\lambda(v)+v(\mathbf{x})\big)|\varphi(\mathbf{x})|^2\mathrm{d}\mathbf{x}
\end{align}
holds.
\end{thm}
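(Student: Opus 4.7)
The plan is to apply the Talman minimax principle (Theorem~\ref{Theorem-Talman-minimax-principle}) with $k=1$ and the one-dimensional trial subspace $\mathfrak{M}=\mathrm{span}\{\varphi\}$, for arbitrary $\varphi\in\mathsf{H}^{1}(\mathbb{R}^{n};\mathbb{C}^{n-1})\subset\mathsf{H}^{1/2}(\mathbb{R}^{n};\mathbb{C}^{n-1})$, and to eliminate the lower spinor component by completing the square.

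For $V=v\otimes\mathbb{I}_{\mathbb{C}^{2(n-1)}}$ and a spinor $\psi=(\varphi,\chi)^{\top}$, writing out the block form of $D_{n}(0)$ produces
\begin{equation*}
\mathtt{d}_{n}[\psi]+\mathtt{v}[\psi]=\int_{\mathbb{R}^{n}}(1+v)|\varphi|^{2}-\int_{\mathbb{R}^{n}}(1-v)|\chi|^{2}+2\,\mathrm{Re}\,\langle K_{n}\varphi,\chi\rangle,
\end{equation*}
where $K_{n}$ is the off-diagonal block of $D_{n}(0)$. Setting $\mu(\varphi):=\sup_{\chi\in\mathsf{H}^{1/2}}(\mathtt{d}_{n}+\mathtt{v})[(\varphi,\chi)]/(\|\varphi\|^{2}+\|\chi\|^{2})$ and $w(\mathbf{x}):=1+\mu(\varphi)-v(\mathbf{x})$, Theorem~\ref{Theorem-Talman-minimax-principle} yields $\mu(\varphi)\ge\lambda(v)>-1$, so that $w\ge 1+\mu(\varphi)>0$. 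Completing the square in $\chi$ produces the algebraic identity
\begin{equation*}
\big(\mathtt{d}_{n}[\psi]+\mathtt{v}[\psi]\big)-\mu(\varphi)\big(\|\varphi\|^{2}+\|\chi\|^{2}\big)=F\big(\mu(\varphi)\big)-\int_{\mathbb{R}^{n}}w\,\bigg|\chi-\frac{K_{n}\varphi}{w}\bigg|^{2},
\end{equation*}
with
\begin{equation*}
F(\mu):=\int_{\mathbb{R}^{n}}(1-\mu+v)|\varphi|^{2}+\int_{\mathbb{R}^{n}}\frac{|K_{n}\varphi|^{2}}{1+\mu-v}.
\end{equation*}

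By definition of $\mu(\varphi)$ the supremum over $\chi\in\mathsf{H}^{1/2}$ of the left-hand side vanishes, so nonnegativity of the square term on the right gives $F(\mu(\varphi))\ge 0$. Furthermore $F$ is strictly decreasing in $\mu$ (one checks $F'(\mu)=-\|\varphi\|^{2}-\int_{\mathbb{R}^{n}}|K_{n}\varphi|^{2}/(1+\mu-v)^{2}<0$), and combined with $\lambda(v)\le\mu(\varphi)$ this yields $F(\lambda(v))\ge F(\mu(\varphi))\ge 0$, which is exactly~\eqref{Hardy_Dirac_Inequality}.

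The main point requiring care is the precise identification of $K_{n}$ in the cross term: for $n=3$ the block $-\mathrm{i}\boldsymbol{\sigma}\cdot\nabla$ is self-adjoint and there is no ambiguity, whereas for $n=2$ the upper-right and lower-left blocks of $D_{2}(0)$ differ, so one must carefully track which operator enters $2\,\mathrm{Re}\,\langle K_{n}\varphi,\chi\rangle$ and hence which one appears in the weighted integrand $|K_{n}\varphi|^{2}$. Convergence of the integrals is routine: $K_{n}\varphi\in\mathsf{L}^{2}$ because $\varphi\in\mathsf{H}^{1}$, the weight $1/(1+\mu-v)$ is bounded above by $1/(1+\mu)$, and the integrals involving $v$ are controlled on $\mathsf{H}^{1/2}$ by the Hardy inequality afforded by $v\otimes\mathbb{I}\in\mathfrak{P}_{n}$.
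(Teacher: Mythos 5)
Your overall strategy is the same as the paper's: apply Theorem~\ref{Theorem-Talman-minimax-principle} with $k=1$ and a one-dimensional trial space, complete the square in the lower spinor component, and transfer the resulting inequality from $\mu(\varphi)$ down to $\lambda(v)$ using monotonicity of $F$. The algebraic identity and the monotonicity of $F$ are all correct, and your remark on the $n=2$ asymmetry of the off-diagonal blocks is a legitimate point of care.

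There is, however, a genuine gap in the step where you conclude $F(\mu(\varphi))\ge 0$. You assert that \emph{by definition of $\mu(\varphi)$} the supremum over $\chi$ of the left-hand side vanishes. What actually follows from the definition is only the one-sided bound: since $R(\chi):=(\mathtt{d}_n+\mathtt{v})[(\varphi,\chi)]/\|(\varphi,\chi)\|^2\le\mu(\varphi)$ for all $\chi$, the left-hand side $\|(\varphi,\chi)\|^2\big(R(\chi)-\mu(\varphi)\big)$ is $\le 0$ for every $\chi$, so $\sup_\chi\mathrm{LHS}\le 0$. But your algebraic identity also shows $\sup_\chi\mathrm{LHS}=F(\mu(\varphi))$ (the nonnegative square term has infimum $0$ over $\chi$). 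Combining these two observations yields $F(\mu(\varphi))\le 0$ — the \emph{opposite} of what you claim. To establish $\sup_\chi\mathrm{LHS}=0$ (equivalently $F(\mu(\varphi))\ge 0$) one needs an additional argument: e.g., note that for any $\lambda<\mu(\varphi)$ there exists $\chi$ with $R(\chi)>\lambda$, so $F(\lambda)>\int w_\lambda|\chi-K_n\varphi/w_\lambda|^2\ge 0$; then use continuity of $F$ to pass to the limit $\lambda\uparrow\mu(\varphi)$. As written, your ``by definition'' shortcut does not close the argument.

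The paper's proof sidesteps this issue cleanly: its Lemma~\ref{Lemma_I_sup_condition} shows the implication ``$J_{n,v,\varphi}(\lambda)\le 0\Rightarrow\sup_\psi I_{n,v,\varphi}(\psi)\le\lambda$'' directly from the completed square (which only uses the one-sided bound that is immediate), takes the contrapositive at $\lambda=\lambda(v)-\varepsilon<\lambda(v)\le\sup I$ to obtain $J_{n,v,\varphi}(\lambda(v)-\varepsilon)>0$, and then lets $\varepsilon\downarrow 0$. This avoids any claim about the supremum being attained or equal to zero. If you replace the ``definition'' step by the contrapositive-plus-limit argument (or by the continuity argument sketched above), your proof becomes correct and essentially coincides with the paper's. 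One further small point worth flagging: both your argument and the paper's implicitly need the completing-the-square competitor $K_n\varphi/w$ (or approximants thereof) to be admissible; the paper handles this by first restricting to $\varphi\in\mathsf{C}_0^\infty(\mathbb{R}^n\setminus\{0\};\mathbb{C}^{n-1})$ via density, which your write-up omits.
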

We follow the tradition of  \cite{Dolbeault-Esteban-Loss-Vega}  and call these type of inequality Hardy-Dirac inequality. In \cite{Dolbeault-Esteban-Sere} it is demonstrated, how one can prove Hardy-Dirac inequalities for $n=3$ with the help of the Talman minimax principle.\\
We know that the lowest eigenvalue of $D_n(-\nu/|\cdot|\otimes \mathbb{I}_{\mathbb{C}^{2(n-1)}})$ in $(-1,1)$ is $\sqrt{1-\big((4-n)\nu\big)^2}$ for $\nu \in \big(0,1/(4-n)\big)$ (see \cite{Dong-Zong} and \cite{Thaller}). Thus Theorem \ref{Theorem-Hardy-Dirac} implies with a simple limiting argument
\begin{cor} \label{Corollary_Dirac_Coulomb_Hardy}
Let $\nu \in [0,1/(4-n)]$. Then 
\begin{align*}
0\leq\int\limits_{\mathbb{R}^n} \Bigg(\frac{|K_n\varphi |^2}{1+
\sqrt{1-\big((4-n)\nu\big)^2}+\frac{\nu}{|x|}}+
\bigg(1-\sqrt{1-\big((4-n)\nu\big)^2}-\frac{\nu}{|x|}\bigg)|\varphi|^2\Bigg) \mathrm{d}\mathbf{x}
\end{align*}
holds for all $ \varphi\in\mathsf{H}^{1}(\mathbb{R}^n;\mathbb{C}^{n-1})$.
\end{cor}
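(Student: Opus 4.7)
The plan is to apply Theorem \ref{Theorem-Hardy-Dirac} directly in the subcritical range $\nu\in[0,1/(4-n))$ and then obtain the critical value $\nu=1/(4-n)$ by a limiting argument. In the subcritical range I set $v(x):=-\nu/|x|$ and note that $v\otimes\mathbb{I}_{\mathbb{C}^{2(n-1)}}\in\mathfrak{P}_n$ directly from the definition of $\mathfrak{P}_n$. The explicit diagonalisations cited in \cite{Dong-Zong} and \cite{Thaller} identify $\lambda(v)=\sqrt{1-((4-n)\nu)^2}$ as the lowest eigenvalue of $D_n(v\otimes\mathbb{I}_{\mathbb{C}^{2(n-1)}})$ in $(-1,1)$. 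Substituting $v$ and $\lambda(v)$ into \eqref{Hardy_Dirac_Inequality} yields the claimed inequality for every $\nu\in[0,1/(4-n))$ and every $\varphi\in\mathsf{H}^{1}(\mathbb{R}^n;\mathbb{C}^{n-1})$.

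For the critical value $\nu=1/(4-n)$, I fix $\varphi\in\mathsf{H}^{1}(\mathbb{R}^n;\mathbb{C}^{n-1})$ and choose a sequence $\nu_k\nearrow 1/(4-n)$. The previous step furnishes the asserted inequality with $\nu_k$ in place of $\nu$, and I pass to the limit $k\to\infty$ by dominated convergence. Pointwise convergence of both integrands is immediate from $\sqrt{1-((4-n)\nu_k)^2}\to 0$ and $\nu_k/|x|\to (1/(4-n))/|x|$. For the first integrand the denominator is bounded below by $1$, so $|K_n\varphi|^2\in L^1$ is a uniform integrable majorant. For the second integrand the absolute value is dominated by $\bigl(1+(1/(4-n))/|x|\bigr)|\varphi|^2$, and to apply dominated convergence I need this majorant in $L^1$.

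Hence the only real analytic input is the integrability of $|x|^{-1}|\varphi|^{2}$ for $\varphi\in\mathsf{H}^{1}$, which I expect to be the main (and essentially the only) obstacle. For $n=3$ this follows from the classical Hardy inequality $\||x|^{-1}\varphi\|_{L^2}^{2}\le 4\|\nabla\varphi\|_{L^2}^{2}$ combined with Cauchy--Schwarz against $\|\varphi\|_{L^2}$. For $n=2$ Hardy fails at the scale $|x|^{-2}$, but the Sobolev embedding $\mathsf{H}^{1}(\mathbb{R}^2)\subset L^p$ for all $p<\infty$, together with the local $L^q$ integrability of $|x|^{-1}$ for any $q<2$, gives $|x|^{-1}|\varphi|^{2}\in L^{1}_{\mathrm{loc}}$, while the tail at infinity is controlled by $|\varphi|^{2}\in L^{1}$. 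Dominated convergence then transports the inequality from the subcritical $\nu_k$ to $\nu=1/(4-n)$, completing the proof.
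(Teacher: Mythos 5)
Your proof is correct and follows exactly the route the paper takes: the paper records that $\lambda(-\nu/|\cdot|)=\sqrt{1-((4-n)\nu)^2}$ for subcritical $\nu$ and then invokes Theorem~\ref{Theorem-Hardy-Dirac} ``with a simple limiting argument''. Your dominated-convergence implementation is a valid way to carry out that limit; the integrability of $|x|^{-1}|\varphi|^2$ for $\varphi\in\mathsf{H}^1$, which you verify via Hardy for $n=3$ and Sobolev/H\"older for $n=2$, also follows in both dimensions at once from the Herbst inequality already cited in the paper.
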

Let $\nu\in \big[ 0,1/(4-n)\big]$. 
With the help of Corollary \ref{Corollary_Dirac_Coulomb_Hardy} and Theorem 1 in \cite{Esteban-Loss-abstract} ($\tilde{D}_n(-\nu/|\cdot|\otimes \mathbb{I}_{\mathbb{C}^{2(n-1)}})$ corresponds to $H$ there) we know that there is only one self-adjoint extension of $\tilde{D}_n(-\nu/|\cdot|\otimes \mathbb{I}_{\mathbb{C}^{2(n-1)}})$ with a positive Schur complement. We denote this distinguished self-adjoint extension by $D^{\nu}_n$. Now we want to give an explicit description of an operator core of $D^{\nu}_n$. For this purpose we introduce polar and spherical coordinates. We denote by the coordinate pair $(\rho,\vartheta)\in [0,\infty)\times [0,2\pi)$ the radial and angular polar coordinates in $\mathbb{R}^2$ and by the coordinate triplet $(r,\theta,\phi)\in [0,\infty)\times [0,\pi)\times [0,2\pi)$ the radial, inclination and azimuthal spherical coordinates in $\mathbb{R}^3$. For $m\in \{-1/2,1/2\}^{n-1}$ we define the function $\zeta_{n,m}^{\nu}$ in polar coordinates for $n=2$
\begin{align}\label{special_function_2d}
             \zeta_{2,m}^{\nu}(\rho,\vartheta):= 
             \xi(\rho)\rho^{\sqrt{1/4-\nu^2}-1/2}
             \begin{pmatrix} 
             \nu \frac{\mathrm{e}^{-\mathrm{i}(1/2+m)\vartheta}}{\sqrt{2\pi}} \\
             -\mathrm{i}\big(\sqrt{1/4-\nu^2}+(-1)^{1/2-m}/2\big)   
             \frac{\mathrm{e}^{\mathrm{i}(1/2-m)\vartheta}}{\sqrt{2\pi}}
             \end{pmatrix}; 
\end{align} 
and in spherical coordinates for $n=3$
\begin{align} \label{special_function_3d}
             \zeta_{3,m}^{\nu}(r,\theta,\phi)&:= 
             \xi(r)r^{\sqrt{1-\nu^2}-1}
             \begin{pmatrix}
             \nu \Omega_{\frac{1}{2}+m_2,m_1,-m_2}
             (\theta,\phi)\\
             -\mathrm{i}\big(\sqrt{1-\nu^2}+(-1)^{\frac{1}{2}-m_2}\big)  \Omega_{\frac{1}{2}-m_2,m_1,m_2}(\theta,\phi)
             \end{pmatrix};  
\end{align} 
with the spherical spinor $\Omega_{l,m,s}$ (see Relation (7) in \cite{Evans-Perry-Siedentop}) and the smooth cut-off function $\xi$ (i.e., $\xi \in \mathsf{C}^{\infty}(\mathbb{R}_{+};\mathbb{R}_{+})$, $ \xi(t)=1 $ for $t\in (0,1)$ and $ \xi(t)=0 $ for $t>2$). 
In the next theorem we give a characterisation of an operator core of $D_n^{\nu}$ 
with the help of the functions $\zeta_{n,m}^{\nu}$ introduced in  \eqref{special_function_2d} and \eqref{special_function_3d}.
\begin{thm}[Operator core] \label{Theorem_operator_core}
Let $\nu\in \big[0,1/(4-n)\big]$. The set
\begin{align} \label{operator_core}
             \mathfrak{C}_{n}^{\nu}:= \mathsf{C}_{0}^{\infty}(\mathbb{R}^n\setminus\{0\};\mathbb{C}^{2(n-1)})\dot{+}
             \begin{cases} 
                     \{0\},\text{ if }n=2,\ \nu=0\text{ or }n=3,\ \nu \in \big[0,\frac{\sqrt{3}}{2}\big]; \\
                     \Span\{\zeta_{n,m}^{\nu} : m \in \{-1/2,1/2\}^{n-1}\},\text{ else};         
             \end{cases}
\end{align} 
is an operator core for $D_n^{\nu}$.
\end{thm}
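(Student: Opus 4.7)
The theorem asserts two things about the set $\mathfrak{C}_n^\nu$: first, $\mathfrak{C}_n^\nu\subset\mathfrak{D}(D_n^\nu)$; second, $\mathfrak{C}_n^\nu$ is dense in $\mathfrak{D}(D_n^\nu)$ with respect to the graph norm of $D_n^\nu$. Since $\mathsf{C}_0^\infty(\mathbb{R}^n\setminus\{0\};\mathbb{C}^{2(n-1)})\subset\mathfrak{D}(D_n^\nu)$ holds by the very definition of $D_n^\nu$ as a self-adjoint extension, the nontrivial part of the inclusion concerns the $\zeta_{n,m}^\nu$. My approach would exploit the characterisation $\mathfrak{D}(D_n^\nu)=\{u\in\mathsf{H}^{1/2}(\mathbb{R}^n;\mathbb{C}^{2(n-1)}):\tilde{D}_n(-\nu/|\cdot|\otimes\mathbb{I})u\in L^2\}$ available via Theorem 1 of \cite{Esteban-Loss-abstract}, whose hypotheses are supplied by Corollary \ref{Corollary_Dirac_Coulomb_Hardy}. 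Membership of each $\zeta_{n,m}^\nu$ then reduces to two computations: an $\mathsf{H}^{1/2}$-regularity check, accessible by a Mellin transform of the radial factor $\rho^{\sqrt{1/4-\nu^2}-1/2}$ (resp.\ $r^{\sqrt{1-\nu^2}-1}$) in the allowed range $\nu\leq 1/(4-n)$; and an $L^2$-bound on $\tilde{D}_n(-\nu/|\cdot|\otimes\mathbb{I})\zeta_{n,m}^\nu$. The latter is a direct computation in polar or spherical coordinates: because the radial exponent is an indicial root of the radial Coulomb-Dirac system and the spinor angular factor has been chosen to intertwine the two components correctly, the $\nu/|x|$ potential cancels the singular part of the radial Dirac derivative, leaving only compactly supported contributions from derivatives of the cut-off $\xi$.

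For the density statement I would reduce to a family of one-dimensional problems by angular-momentum decomposition. The space $L^2(\mathbb{R}^n;\mathbb{C}^{2(n-1)})$ splits orthogonally into channels indexed by half-integer angular momenta (Fourier modes on the circle for $n=2$, the spherical spinors $\Omega_{l,m,s}$ for $n=3$), on each of which $D_n^\nu$ is unitarily equivalent to a $2\times 2$ radial Dirac operator on $L^2((0,\infty);\mathbb{C}^2)$ with a Coulomb term and an effective angular parameter $\kappa$. The Weyl limit-point/limit-circle criterion, applied to the indicial roots $\pm\sqrt{\kappa^2-\nu^2}$ at the origin, shows that every channel except those corresponding to $m\in\{-1/2,1/2\}^{n-1}$ lies strictly in the limit-point case, so radial $\mathsf{C}_0^\infty((0,\infty);\mathbb{C}^2)$ is already a core in them. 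The exceptional lowest channels remain limit-point precisely when $\nu=0$ (if $n=2$) or $\nu\leq\sqrt{3}/2$ (if $n=3$), explaining the dichotomy in \eqref{operator_core}; in the complementary regime each exceptional channel is limit-circle, and its one-dimensional defect space is spanned by a function whose radial profile is exactly $\rho^{\sqrt{1/4-\nu^2}-1/2}$ (resp.\ $r^{\sqrt{1-\nu^2}-1}$), matching that of $\zeta_{n,m}^\nu$. A global graph-norm approximation for arbitrary $u\in\mathfrak{D}(D_n^\nu)$ is then assembled by decomposing $u$ into channels, truncating to finitely many, approximating each radial component by an element of $\mathsf{C}_0^\infty((0,\infty);\mathbb{C}^2)$ plus, in the exceptional channels, a scalar multiple of the radial part of $\zeta_{n,m}^\nu$, and reassembling.

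The principal obstacle will be identifying the distinguished extension with the small-root branch in each exceptional channel, i.e., confirming that the Schur-complement boundary condition from \cite{Esteban-Loss-abstract}, equivalently the finiteness of the Hardy-Dirac functional of Theorem \ref{Theorem-Hardy-Dirac}, selects $\zeta_{n,m}^\nu$ rather than the linearly independent large-root solution. A clean way is to test the two candidate radial asymptotics against the Hardy-Dirac weight near the origin: only the small-root behaviour keeps the weighted integral finite. Once this selection has been carried out, the channel-wise reconstruction is routine and the theorem follows.
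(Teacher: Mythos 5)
Your overall architecture is close to the paper's: the paper also reduces to one-dimensional radial Dirac operators via angular momentum decomposition, invokes Weyl limit-point/limit-circle theory (through \cite{Weidmann-Oszillationsmethoden}) to classify channels, identifies the exceptional channels with $\kappa_j^2-\nu^2<1/4$, and uses the Esteban-Loss machinery to handle membership in $\mathfrak{D}(D_n^\nu)$. The reassembly-by-channels idea is packaged in the paper as a statement about essential self-adjointness of the restriction of the adjoint to $\mathfrak{C}_n^\nu$ (Lemma~\ref{Lemma_essential_self_adjoint}), but that is logically the same move as your "truncate to finitely many channels and approximate."

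There is, however, a genuine gap in the inclusion step. You propose to verify $\zeta_{n,m}^\nu\in\mathfrak{D}(D_n^\nu)$ by checking $\zeta_{n,m}^\nu\in\mathsf{H}^{1/2}$ together with $\tilde{D}_n(-\nu/|\cdot|\otimes\mathbb{I})\zeta_{n,m}^\nu\in L^2$. This is not the characterisation that Theorem~1 of \cite{Esteban-Loss-abstract} supplies, and, more importantly, it fails exactly at the endpoint $\nu=1/(4-n)$ which the theorem must cover. At critical coupling the radial exponent in \eqref{special_function_2d}--\eqref{special_function_3d} becomes $-1/2$ (for $n=2$) or $-1$ (for $n=3$), and the resulting function is borderline not in $\mathsf{H}^{1/2}(\mathbb{R}^n)$; indeed $\mathfrak{D}(D_n^\nu)$ is no longer contained in $\mathsf{H}^{1/2}$ there, which is the whole point of needing the Esteban-Loss extension rather than the Nenciu one. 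What \cite{Esteban-Loss-abstract} actually characterises is membership in the form domain of the Schur complement, and this is why the paper reduces the inclusion to showing that the \emph{upper} spinor $\varsigma_{n,m}^\nu$ lies in the closure $\mathfrak{Q}_n^\nu$ of the Hardy-Dirac form $\mathtt{q}_n^\nu$. Since $\varsigma_{n,m}^\nu$ is not itself in the form core $\mathsf{C}_0^\infty(\mathbb{R}^n\setminus\{0\})$, the paper then needs the abstract weak-convergence Lemma~\ref{Lemma_abstract_quadratic_form_domain} together with the cutoff family $\varsigma_{n,m,k}^\nu$ and an explicit uniform bound on $\mathtt{q}_n^\nu[\varsigma_{n,m,k}^\nu]$ (the integration-by-parts computation \eqref{final_need_2}--\eqref{final_need_6}). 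Your remark that "only the small-root behaviour keeps the weighted integral finite" is the correct intuition for that computation, but it does not by itself discharge the step: you still need a closability/approximation mechanism to conclude membership in $\mathfrak{Q}_n^\nu$, and you cannot shortcut it via an $\mathsf{H}^{1/2}$ check.

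A secondary, smaller imprecision: writing $\tilde{D}_n(\cdot)\zeta_{n,m}^\nu\in L^2$ because "the Coulomb potential cancels the singular part" is true distributionally, but this condition only pins down membership in the domain of the adjoint $(\tilde{D}_n)^*$; for a symmetric operator with several self-adjoint extensions it does not single out the distinguished one. The selection of the correct boundary condition is exactly what the Schur-complement form domain (equivalently, the finiteness of $\mathtt{q}_n^\nu$) accomplishes, and this is where your argument has to be made rigorous rather than heuristic.
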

The knowledge of the operator core of $D_n^{\nu}$ is important for the proof of estimates on the square of the operator, see e.g. \cite{Morozov-Mueller-CLR}. In Remark \ref{Remark_operator_core_nenciu} we show that for $\nu\in\big[0,1/(4-n)\big)$ the set $\mathfrak{C}_{n}^{\nu}$ is an operator core for  $D_n(-\nu/|\cdot|\otimes \mathbb{I}_{\mathbb{C}^{2(n-1)}})$. A direct consequence is:
\begin{cor}
Let $\nu\in [0,1/(4-n))$. The distinguished self-adjoint extensions of $\tilde{D}_n(-\nu/|\cdot|\otimes \mathbb{I}_{\mathbb{C}^{2(n-1)}})$ in the sense of \cite{Nenciu} and \cite{Esteban-Loss-abstract} coincide, i.e.,
\begin{align*}
                        D_n^{\nu}=D_n(-\nu/|\cdot|\otimes \mathbb{I}_{\mathbb{C}^{2(n-1)}}).  
\end{align*}
\end{cor}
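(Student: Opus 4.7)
The plan is to reduce the identification of the two distinguished extensions to a comparison on a common operator core. Both $D_n^\nu$ (the Esteban--Loss extension) and $D_n(-\nu/|\cdot|\otimes \mathbb{I}_{\mathbb{C}^{2(n-1)}})$ (the Nenciu extension) are self-adjoint extensions of $\tilde{D}_n(-\nu/|\cdot|\otimes \mathbb{I}_{\mathbb{C}^{2(n-1)}})$. By Theorem~\ref{Theorem_operator_core}, the set $\mathfrak{C}_n^\nu$ is an operator core for $D_n^\nu$, and by Remark~\ref{Remark_operator_core_nenciu} the same $\mathfrak{C}_n^\nu$ is also a core for the Nenciu extension. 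It therefore suffices to verify that the two operators act identically on $\mathfrak{C}_n^\nu$, after which closedness and the core property give equality.

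On the piece $\mathsf{C}_{0}^{\infty}(\mathbb{R}^n\setminus\{0\};\mathbb{C}^{2(n-1)}) \subset \mathfrak{C}_n^\nu$ the agreement is automatic, since both extensions restrict there to $\tilde{D}_n(-\nu/|\cdot|\otimes \mathbb{I}_{\mathbb{C}^{2(n-1)}})$. On the finite-dimensional addendum $\Span\{\zeta_{n,m}^\nu : m \in \{-1/2,1/2\}^{n-1}\}$ I would use the standard observation that any self-adjoint extension $D$ of the minimal symmetric operator acts by the formal differential expression in the distributional sense: for $\psi$ in the domain of $D$ and any $\phi \in \mathsf{C}_{0}^{\infty}(\mathbb{R}^n\setminus\{0\};\mathbb{C}^{2(n-1)})$ one has
\begin{equation*}
\langle \phi, D\psi \rangle = \langle \tilde{D}_n(-\nu/|\cdot|\otimes \mathbb{I}_{\mathbb{C}^{2(n-1)}})\phi, \psi \rangle,
\end{equation*}
which pins down $D\psi$ as $\tilde{D}_n(-\nu/|\cdot|\otimes \mathbb{I}_{\mathbb{C}^{2(n-1)}})\psi$ in $\mathcal{D}'(\mathbb{R}^n\setminus\{0\};\mathbb{C}^{2(n-1)})$. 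Applied to each $\zeta_{n,m}^\nu$ separately for the two extensions, this yields the same element of $\mathsf{L}^2$.

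Since $D_n^\nu$ and $D_n(-\nu/|\cdot|\otimes \mathbb{I}_{\mathbb{C}^{2(n-1)}})$ are two closed operators that coincide on the common core $\mathfrak{C}_n^\nu$, their closures coincide, which is precisely the claimed identity. The real work is hidden in Remark~\ref{Remark_operator_core_nenciu} and in Theorem~\ref{Theorem_operator_core}: once $\mathfrak{C}_n^\nu$ is known to be a core for both operators, the corollary is essentially a one-line consequence and, in particular, requires no further case-by-case analysis of the singular behaviour of the functions $\zeta_{n,m}^\nu$ at the origin.
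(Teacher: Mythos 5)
Your proposal is correct and matches the paper's intent: the corollary is a direct consequence of Theorem~\ref{Theorem_operator_core} and Remark~\ref{Remark_operator_core_nenciu}, since both self-adjoint extensions are restrictions of $\big(\tilde{D}_n(-\nu/|\cdot|\otimes \mathbb{I}_{\mathbb{C}^{2(n-1)}})\big)^{*}$ and hence agree on the common core $\mathfrak{C}_n^\nu$. The distributional-action justification you give for agreement on $\Span\{\zeta_{n,m}^\nu\}$ is a sound (if slightly more elaborate) way of expressing that abstract fact.
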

The proofs of the minimax characterisations rely on the  angular momentum channel decomposition of the Coulomb-Dirac operator in the momentum space. This representation and the 
corresponding unitary transformations are introduced in the next section. In the remaining sections we prove in the order of enumeration: Theorems \ref{Theorem-Talman-minimax-principle}, \ref{Theorem_Esteban_Sere}, \ref{Theorem-Hardy-Dirac} and \ref{Theorem_operator_core}.
\section{Angular momentum channel decomposition in the momentum space}
The Fourier transform connects the quantum mechanical descriptions of a particle in the configuration and momentum
space. We use the standard unitary Fourier transform $\mathcal{F}_n$ in $\mathsf{L}^2(\mathbb{R}^n)$ given for $\varphi \in 
\mathsf{L}^1(\mathbb{R}^n)\cap\mathsf{L}^2(\mathbb{R}^n)$ by
\begin{align} \label{Definition_Fourier_transform}
              \mathcal{F}_{n}\varphi:=\frac{1}{(2\pi)^{n/2}}\int\limits_{\mathbb{R}^n}\mathrm{e}^{-\mathrm{i}\langle\cdot,\mathbf{x}\rangle} 
              \varphi(\mathbf{x}) 
               \mathrm{d}\mathbf{x}.
\end{align}
For the angular momentum channel decomposition in $n$ dimensions we use an orthonormal basis in $\mathsf{L}^2(\mathbb{S}^{n-1};\mathbb{C}^{n-1})$.  For $n=2$ this orthonormal basis is $\big((2\pi)^{-1/2}\mathrm{e}^{\mathrm{i}m (\cdot)}\big)_{m\in\mathbb{Z}}$. In three dimensions we use spherical spinors $\Omega_{l,m,s}$, which are defined in Relation (7) in \cite{Evans-Perry-Siedentop}, with $l\in\mathbb{N}_{0}$, $\ m\in \{-l-1/2,\ldots,l+1/2\}$ and $s\in\{-1/2,1/2\}$. The corresponding  index sets are denoted by 
\begin{align} \label{T_2}
           \mathfrak{T}_2:=\mathbb{Z};
\end{align}
and
\begin{align} \label{T_3}
\mathfrak{T}_{3}:=
                  \Bigg\{(l,m,s): l\in \mathbb{N}_{0}, m\in \bigg\{-l-\frac{1}{2},\ldots,l+
                  \frac{1}{2}\bigg\},s=\pm\frac{1}{2},\Omega_{l,m,s}\neq 0 \Bigg\}     .            
 \end{align}
Furthermore, we define subsets  $\mathfrak{T}^{\pm}_n$ of $\mathfrak{T}_{n} $:
\begin{align}\label{Tpm}
       \mathfrak{T}^{a}_n:=\begin{cases}
       2\mathbb{Z} &\text{ if }n=2,\ a=+;\\
       2\mathbb{Z}+1 &\text{ if }n=2,\ a=-;\\
       \{(l,m,s)\in \mathfrak{T}_3: s=\pm 1/2\}&\text{ if }n=3,\ a=\pm.
       \end{cases}
\end{align} 
Note that if $(l,m,-1/2)\in \mathfrak{T}^{-}_3$ then $l \in \mathbb{N}$. \\
Moreover, we introduce bijective maps 
\begin{align}
T_2: \mathfrak{T}_2&\rightarrow \mathfrak{T}_2,\ T_2 k:=k+1;
\end{align}
and
\begin{align}
T_3: \mathfrak{T}_3 \rightarrow \mathfrak{T}_3,\ T_3(l,m,s):=(l+2s,m,-s).
\end{align}
We can represent any $\varphi\in\mathsf{L}^2(\mathbb{R}^2;\mathbb{C})$ in polar coordinates and $\zeta\in\mathsf{L}^2(\mathbb{R}^3;\mathbb{C}^2)$ in spherical coordinates as 
\begin{align} 
         \varphi(\rho,\vartheta)&=\sum\limits_{k\in\mathfrak{T}_2}(2\pi \rho)^{-1/2} 
\varphi_{k}(\rho)\mathrm{e}^{\mathrm{i}k\vartheta}; \label{Definition_representation_2}\\
         \zeta(r,\theta,\phi)&=\sum\limits_{(l,m,s)\in \mathfrak{T}_3}r^{-1} \zeta_{(l,m,s)}(r) \Omega_{l,m,s}(\theta,\phi);
         \label{Definition_representation_3}
\end{align}
with 
\begin{align}
         \varphi_{k}(\rho)&:=\sqrt{\frac{\rho}{2\pi}}
         \int\limits_{0}^{2\pi}\varphi(\rho,\vartheta) \mathrm{e}^{-\mathrm{i}k\vartheta}
         \mathrm{d}\vartheta ;\label{Definition_representation_4}\\
         \zeta_{(l,m,s)}(r)&:=r\int\limits_{0}^{2\pi}\int\limits_{0}^{\pi}\big\langle \Omega_{l,m,s}
           (\theta,\phi),\zeta(r,\theta,\phi)\big\rangle_{\mathbb C^2} 
           \sin(\theta)\mathrm{d}\theta\mathrm{d}\phi. \label{Definition_representation_5}
\end{align}
With the help of  \eqref{Definition_representation_4} and \eqref{Definition_representation_5} we define the unitary 
operator
\begin{align} \label{Definition_U_n}
\mathcal{U}_n: \mathsf{L}^2(\mathbb{R}^n;\mathbb{C}^{n-1}) \rightarrow \bigoplus\limits_{j\in\mathfrak{T}_n}\mathsf{L}^2(\mathbb{R}_{+});\quad \psi \mapsto   \bigoplus\limits_{j\in\mathfrak{T}_n}\psi_j. 
\end{align}
For the proof of the following lemma see Theorem 2.2.5 in \cite{Balinsky-Evans} (based on Lemmata 2.1, 2.2 of \cite{Bouzouina}) for $n=2$ and Section 2.2 in \cite{Balinsky-Evans} for $n=3$.
\begin{lem}\label{Lemma_Coulomb_Channel_Decomposition}
For $j\in \big(\mathbb{N}_{0}/2-1/2\big)$ and $z\in (1, \infty)$ let
\begin{equation}\label{Q}
 Q_{j}(z)= 2^{-j-1}\int_{-1}^1(1- t^2)^{j}(z -t)^{-j-1}\,\mathrm dt
\end{equation}
be a Legendre function of the second kind (see Section 15.3 in \cite{Whittaker-Watson}).
Let the ses\-qui\-li\-ne\-ar form $\mathtt{q}_j$ be defined on 
$\mathsf L^2\big(\mathbb R_+, (1+ p^2)^{1/2}\mathrm dp\big)\times\mathsf L^2\big(\mathbb R_+, (1+ p^2)^{1/2}\mathrm dp\big)$ by
\begin{equation}\label{q_m}
 \mathtt{q}_j[f,g] := \pi^{-1}\int_{0}^{\infty}\int_{0}^{\infty}\overline{f(p)}Q_{j}\bigg(\frac12\Big(\frac {q}p +\frac p{q}\Big)\bigg)g(q)\,\mathrm dq\,\mathrm dp.
\end{equation}
For the special case $f=g$ we introduce $\mathtt{q}_j[f]:=\mathtt{q}_j[f,f]$.\\
Then for every $\zeta,\eta\in \mathsf H^{1/2}(\mathbb R^n)$ the relation
\begin{equation}\label{Coulomb form}
\int_{\mathbb{R}^n}\frac{\overline{\zeta}(\mathbf{x})\cdot \eta(\mathbf{x})}{|\mathbf{x}|}\mathrm{d}\mathbf{x}
 =\begin{cases}
  \sum\limits_{k\in \mathfrak{T}_2} \mathtt{q}_{|k|-1/2}\big[(\mathcal F_2 \zeta )_k,(\mathcal F_2 \eta )_k\big]\text{ if }n=2,\\
  \sum\limits_{(l,m,s)\in\mathfrak{T}_3} \mathtt{q}_{l}\big[
  (\mathcal F_3 \zeta )_{(l,m,s)},(\mathcal F_3 \eta)_{(l,m,s)}\big]\text{ if }n=3,
  \end{cases}
\end{equation}
holds. 
\end{lem}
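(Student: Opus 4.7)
The plan is to pass to momentum space and then decompose the resulting radial-angular kernel with respect to the angular basis built into $\mathcal{U}_n$. Using the tempered-distribution Fourier transform $\widehat{|\cdot|^{-1}}(\mathbf{p})=c_n|\mathbf{p}|^{-(n-1)}$ with $c_2=2\pi$, $c_3=4\pi$, Plancherel rewrites the left-hand side of \eqref{Coulomb form} as
\begin{equation*}
\int_{\mathbb{R}^n}\frac{\overline{\zeta(\mathbf{x})}\cdot\eta(\mathbf{x})}{|\mathbf{x}|}\,\mathrm{d}\mathbf{x}=c'_n\iint_{\mathbb{R}^n\times\mathbb{R}^n}\frac{\overline{(\mathcal{F}_n\zeta)(\mathbf{p})}\cdot(\mathcal{F}_n\eta)(\mathbf{q})}{|\mathbf{p}-\mathbf{q}|^{n-1}}\,\mathrm{d}\mathbf{p}\,\mathrm{d}\mathbf{q},
\end{equation*}
and the double integral converges absolutely on $\mathsf{H}^{1/2}$ by the Kato/Hardy inequality, so by density it suffices to prove the identity for Schwartz data supported away from the origin. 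Setting $\mathbf{p}=p\hat{\mathbf{p}}$, $\mathbf{q}=q\hat{\mathbf{q}}$ and $t:=(p^2+q^2)/(2pq)\in(1,\infty)$, one has $|\mathbf{p}-\mathbf{q}|^{n-1}$ depending on the angle $\hat{\mathbf{p}}\cdot\hat{\mathbf{q}}$ alone once $p,q$ are fixed, which puts us in position to use a Legendre expansion.

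For $n=3$ I would use the generating identity $(t-x)^{-1}=\sum_{l\geq 0}(2l+1)P_l(x)Q_l(t)$ valid for $t>1,\,|x|\leq 1$, combined with the addition theorem for the spherical spinors $\Omega_{l,m,s}$, which recasts $(2l+1)P_l(\hat{\mathbf{p}}\cdot\hat{\mathbf{q}})$ as a bilinear sum over $(l,m,s)\in\mathfrak{T}_3$ of $\Omega_{l,m,s}(\hat{\mathbf{p}})$ against $\Omega_{l,m,s}(\hat{\mathbf{q}})$. Substituting the resulting expansion of $|\mathbf{p}-\mathbf{q}|^{-2}$ into the momentum integral, applying Fubini, inserting \eqref{Definition_representation_5} for $\mathcal{F}_3\zeta$ and $\mathcal{F}_3\eta$, and using orthonormality of $(\Omega_{l,m,s})$ on $\mathsf{L}^2(\mathbb{S}^2;\mathbb{C}^2)$ collapses the angular integrations and leaves exactly the sum of $\mathtt{q}_l[(\mathcal{F}_3\zeta)_{(l,m,s)},(\mathcal{F}_3\eta)_{(l,m,s)}]$; the $\pi^{-1}$ in \eqref{q_m} appears from reconciling $c_3=4\pi$, the $(2\pi)^{-3}$ from Plancherel, the $4\pi$ from the addition theorem, and the radial Jacobians $p^2,q^2$. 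For $n=2$ the analogous ingredient is the classical Fourier series
\begin{equation*}
\frac{1}{\sqrt{t-\cos\alpha}}=\frac{\sqrt{2}}{\pi}\sum_{k\in\mathbb{Z}}Q_{|k|-1/2}(t)\,\mathrm{e}^{\mathrm{i}k\alpha},\qquad t>1,
\end{equation*}
whose coefficients are read off directly from the integral representation \eqref{Q} after the substitution $s=\cos\alpha$. Taking $\alpha=\vartheta_p-\vartheta_q$ and combining with \eqref{Definition_representation_4} and orthogonality of $(\mathrm{e}^{\mathrm{i}k\vartheta})_{k\in\mathbb{Z}}$ on $[0,2\pi)$ produces the diagonal sum of $\mathtt{q}_{|k|-1/2}$-forms.

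The main obstacle is justifying the termwise manipulations: near the diagonal $\mathbf{p}=\mathbf{q}$ the parameter $t$ tends to $1$, where $Q_j$ is logarithmically singular, so the angular series for the kernel is not absolutely convergent as a function of $(\mathbf{p},\mathbf{q})$. I would handle this by cutting out $\{|\mathbf{p}-\mathbf{q}|<\varepsilon\}$, where standard asymptotics of $Q_j$ away from $1$ collected in Section~15.3 of \cite{Whittaker-Watson} give absolute convergence and legalize Fubini and the exchange of sum and integral, and then remove the cut-off with monotone/dominated convergence, using the Hardy/Kato bound to control the left-hand side uniformly in $\varepsilon$. Density of Schwartz functions in $\mathsf{H}^{1/2}$ then extends the identity to the stated class.
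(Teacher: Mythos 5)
The paper does not actually prove Lemma~\ref{Lemma_Coulomb_Channel_Decomposition}; it refers the reader to Theorem~2.2.5 and Section~2.2 of \cite{Balinsky-Evans} (resting on \cite{Bouzouina} for $n=2$ and on \cite{Evans-Perry-Siedentop} for $n=3$). Your sketch reproduces exactly the standard argument used in those references: pass to momentum space, replace the Coulomb multiplication operator by convolution with $|\mathbf{p}-\mathbf{q}|^{-(n-1)}$, expand the kernel by the Heine formula $(t-x)^{-1}=\sum_{l\geq0}(2l+1)P_l(x)Q_l(t)$ (resp.\ the Fourier series $(t-\cos\alpha)^{-1/2}=\tfrac{\sqrt2}{\pi}\sum_{k\in\mathbb Z}Q_{|k|-1/2}(t)\,\mathrm e^{\mathrm ik\alpha}$), and then use the addition theorem together with orthonormality to diagonalise in the angular index. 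This is essentially the same approach.

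One small point of care you should spell out: for $n=3$ you must use the \emph{matrix} form of the addition theorem, $\sum_{(l,m,s)\in\mathfrak T_3,\,l\text{ fixed}}\Omega_{l,m,s}(\hat{\mathbf p})\,\Omega_{l,m,s}(\hat{\mathbf q})^{*}=\tfrac{2l+1}{4\pi}P_l(\hat{\mathbf p}\cdot\hat{\mathbf q})\,\mathbb I_{\mathbb C^{2}}$, sandwiched between $\overline{\mathcal F_3\zeta(\mathbf p)}$ and $\mathcal F_3\eta(\mathbf q)$, rather than its scalar trace $\sum\Omega_{l,m,s}(\hat{\mathbf q})^{*}\Omega_{l,m,s}(\hat{\mathbf p})=\tfrac{2l+1}{2\pi}P_l$. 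Only with the matrix version do the two angular integrations factor into a product of orthonormality relations, and only then does the prefactor come out as $\pi^{-1}$ as in \eqref{q_m} (the scalar version leaves you off by a factor of $2$). The convergence justification you outline is plausible; a cleaner route is to observe that for Schwartz data supported away from the origin the double momentum integral with the kernel $|\mathbf p-\mathbf q|^{-(n-1)}$ is absolutely convergent, that the partial-wave expansion of the kernel converges in $\mathsf L^{2}$ of the two angular variables for a.e.\ $(p,q)$, and then to use the Kato bound channel by channel (as recorded in \eqref{inequalities}) to dominate the resulting radial sum.
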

The operators $-\mathrm{i}\boldsymbol{\sigma}\cdot\nabla$ and $-\mathrm{i}\boldsymbol{\alpha}\cdot\nabla$ are partially diagonalised in the momentum space by the unitary transforms 
\begin{align} \label{Definition_W_2}
         \mathcal{W}_2: \mathsf{L}^2(\mathbb{R}^2;\mathbb{C}^2) \rightarrow
          \bigoplus\limits_{k \in\mathfrak{T}_2}
         \mathsf{L}^2(\mathbb{R}_{+};\mathbb{C}^2);\quad \begin{pmatrix}
         \varphi \\ \psi         \end{pmatrix} \mapsto  
         \bigoplus\limits_{k \in\mathfrak{T}_2}
         \begin{pmatrix}
         \varphi_{k} \\ \psi_{T_2 k}         \end{pmatrix}
\end{align}
and
\begin{align} \label{Definition_W_Teil_1}
         \mathcal{W}_3: \mathsf{L}^2(\mathbb{R}^3;\mathbb{C}^4) \rightarrow \bigoplus\limits_{(l,m,s)\in\mathfrak{T}_3}
         \mathsf{L}^2(\mathbb{R}_{+};\mathbb{C}^2);\quad \begin{pmatrix}
         \psi_1 \\ \psi_2 \\ \psi_3 \\ \psi_4
         \end{pmatrix} \mapsto 
         \bigoplus_{(l,m,s)\in\mathfrak{T}_3}
         \begin{pmatrix}
                \psi_{(l,m,s)}^{+} \\ 
                 \psi_{T_3 (l,m,s)}^{-}
         \end{pmatrix}
\end{align}
with 
\begin{align}\label{Definition_W_Teil_2}
               \psi_{(l,m,s)}^{+}:=\begin{pmatrix}
                        \psi_1 \\ \psi_2 
                \end{pmatrix}_{(l,m,s)} \text { and } \psi_{(l,m,s)}^{-}:=\begin{pmatrix}
                        \psi_3\\ \psi_4
                \end{pmatrix}_{(l,m,s)}
\end{align}
for $(l,m,s)\in\mathfrak{T}_3$. To be more precise: 
\begin{lem} \label{Lemma_alpha_dot_impuls}
For the self-adjoint operators  $-\mathrm{i}\boldsymbol{\sigma}\cdot\nabla$ and $-\mathrm{i}\boldsymbol{\alpha}\cdot\nabla$ the relations
\begin{align}        \label{kinetic}         
             (\mathcal{W}_n\mathcal{F}_n)^{*} \Bigg( \bigoplus\limits_{j\in \mathfrak{T}_n}
             \begin{pmatrix}
             0 & (\cdot) \\
             (\cdot) & 0
             \end{pmatrix}\Bigg)\big(\mathcal{W}_n\mathcal{F}_n\big)
             =\begin{cases}
             -\mathrm{i}\boldsymbol{\sigma}\cdot \nabla\text{ if }n=2, \\
             -\mathrm{i}\boldsymbol{\alpha}\cdot \nabla\text{ if }n=3, 
             \end{cases}
\end{align}
hold.
\end{lem}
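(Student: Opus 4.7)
The plan is to prove both cases of \eqref{kinetic} by a direct computation on the Schwartz class, which will serve as a common core for the two sides of the identity, and then extend by taking closures since both operators are self-adjoint. Under $\mathcal{F}_n$ the operator $-\mathrm{i}\boldsymbol{\sigma}\cdot\nabla$ (respectively $-\mathrm{i}\boldsymbol{\alpha}\cdot\nabla$) becomes multiplication by $\boldsymbol{\sigma}\cdot\mathbf{p}$ (respectively $\boldsymbol{\alpha}\cdot\mathbf{p}$), so it will suffice to conjugate the matrix-valued multiplication operator by $\mathcal{W}_n$. For $n=2$ one writes $\boldsymbol{\sigma}\cdot\mathbf{p}=\rho\bigl(\begin{smallmatrix}0 & \mathrm{e}^{-\mathrm{i}\vartheta}\\ \mathrm{e}^{\mathrm{i}\vartheta} & 0\end{smallmatrix}\bigr)$ in polar momentum coordinates. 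From \eqref{Definition_representation_4} one reads that multiplication by $\mathrm{e}^{\pm\mathrm{i}\vartheta}$ shifts the angular index $k$ by $\mp 1$, while multiplication by $\rho$ preserves the decomposition and acts radially, yielding $(\rho \mathrm{e}^{-\mathrm{i}\vartheta}\hat{\psi})_k=\rho\hat{\psi}_{k+1}$ and $(\rho \mathrm{e}^{\mathrm{i}\vartheta}\hat{\varphi})_{k+1}=\rho\hat{\varphi}_k$. Combined with the definition \eqref{Definition_W_2} of $\mathcal{W}_2$, which pairs $\hat{\varphi}_k$ with $\hat{\psi}_{T_2 k}=\hat{\psi}_{k+1}$, this reproduces precisely the off-diagonal block $\bigl(\begin{smallmatrix}0&\rho\\ \rho&0\end{smallmatrix}\bigr)$ in the $k$-th channel.

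For $n=3$ the analogous computation rests on the factorisation $\boldsymbol{\alpha}\cdot\mathbf{p}=p\bigl(\begin{smallmatrix}0 & \boldsymbol{\sigma}\cdot\hat{\mathbf{p}}\\ \boldsymbol{\sigma}\cdot\hat{\mathbf{p}} & 0\end{smallmatrix}\bigr)$ together with the standard identity, in the convention of \cite{Evans-Perry-Siedentop}, that $\boldsymbol{\sigma}\cdot\hat{\mathbf{p}}$ acts on the spherical spinors as a unitary interchange between the channels labelled by $(l,m,s)$ and $T_3(l,m,s)=(l+2s,m,-s)$. This is precisely the motivation for the definitions of $T_3$ and $\mathcal{W}_3$. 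Substituting the expansion \eqref{Definition_representation_3} of $\hat{\psi}^{\pm}$ into $\boldsymbol{\alpha}\cdot\mathbf{p}\bigl(\begin{smallmatrix}\hat{\psi}^+\\ \hat{\psi}^-\end{smallmatrix}\bigr)$, relabelling by the involution $T_3$ and collecting the $(l,m,s)$-channel by \eqref{Definition_representation_5}, the resulting two-by-two block acting on $\bigl(\begin{smallmatrix}\hat{\psi}^+_{(l,m,s)}\\ \hat{\psi}^-_{T_3(l,m,s)}\end{smallmatrix}\bigr)$ reduces to $\bigl(\begin{smallmatrix}0 & p\\ p & 0\end{smallmatrix}\bigr)$, as required.

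I expect the only genuine technical obstacle to lie in the sign and phase bookkeeping in three dimensions: one must invoke the exact action of $\boldsymbol{\sigma}\cdot\hat{\mathbf{p}}$ on $\Omega_{l,m,s}$ in the precise convention fixed in \cite{Evans-Perry-Siedentop} and verify that the phases thread through the relabelling by $T_3$ so that no residual sign survives in the final off-diagonal block. Once the identity is established on a Schwartz core, the extension to the closed self-adjoint operators on both sides of \eqref{kinetic} is routine, because Schwartz functions are a core both for $-\mathrm{i}\boldsymbol{\sigma}\cdot\nabla$ and $-\mathrm{i}\boldsymbol{\alpha}\cdot\nabla$ and for the direct sum of multiplication operators appearing on the right, so equality on the core propagates to equality of the operators.
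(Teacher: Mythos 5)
Your proposal is correct and takes essentially the same route as the paper: after Fourier transform, both decompose the multiplication operator $\boldsymbol{\sigma}\cdot\mathbf{p}$ (resp.\ $\boldsymbol{\alpha}\cdot\mathbf{p}$) using polar/spherical coordinates and the key identity for the action of $\boldsymbol{\sigma}\cdot\hat{\mathbf{p}}$ on the angular basis (the phase factors $\mathrm{e}^{\pm\mathrm{i}\vartheta}$ for $n=2$, Relation 2.1.28 of \cite{Balinsky-Evans} giving $\boldsymbol{\sigma}\cdot\hat{\mathbf{p}}\,\Omega_{l,m,s}=\Omega_{l+2s,m,-s}$ for $n=3$), then conclude by density. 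The only cosmetic difference is that the paper phrases the channel calculation through quadratic forms $\langle\psi,\cdot\,\psi\rangle$ on $\mathsf{C}_0^\infty$, whereas you compute the operator action directly on Schwartz functions and close up; these are interchangeable here.
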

\begin{proof}
By a straightforward calculation and Relation 2.1.28 in \cite{Balinsky-Evans} the 
relations 
\begin{align}\label{sigma_dot_x}
\boldsymbol{\sigma}\cdot \mathbf{x} &= \begin{pmatrix}
    0 & \mathrm{e}^{-\mathrm{i}\vartheta}\rho \\ \mathrm{e}^{\mathrm{i}\vartheta}\rho & 0
\end{pmatrix}\text{ for }\mathbf{x}\in\mathbb{R}^2; \\
\label{sigma_dot_p_spherical_spinors}
         \boldsymbol{\sigma}\cdot \frac{\mathbf{x}}{|\mathbf{x}|}\Omega_{l,m,s}&=\Omega_{l+2s,m,-s}\text{ for }\mathbf{x}\in\mathbb{R}^3\text{ and }(l,m,s)\in\mathfrak{T}_3; 
\end{align}
hold.\\
The set $\mathsf{C}_{0}^{\infty}(\mathbb{R}^n;\mathbb{C}^{2(n-1)})$ is dense in 
$ \mathsf{H}^{1}(\mathbb{R}^n;\mathbb{C}^{2(n-1)})$. Thus it is enough to work with $\psi \in\mathsf{C}^{\infty}_{0}(\mathbb{R}^2;\mathbb{C}^{2})$ and $\zeta \in\mathsf{C}^{\infty}_{0}(\mathbb{R}^3;\mathbb{C}^{4})$.\\ 
Moreover, the Fourier transform diagonalises differential operators: 
\begin{align}
\langle \psi, -\mathrm{i}\boldsymbol{\sigma}\cdot \nabla \psi \rangle =\langle \mathcal{F}_2\psi,\boldsymbol{\sigma}\cdot \boldsymbol{p}\mathcal{F}_2\psi \rangle , \label{Proof_Sigma_P_0}\\
\langle \zeta, -\mathrm{i}\boldsymbol{\alpha}\cdot \nabla \zeta \rangle =\langle \mathcal{F}_3 \zeta,\boldsymbol{\alpha}\cdot \boldsymbol{p}\mathcal{F}_3 \zeta \rangle .   \label{Proof_Sigma_P_1}
\end{align}
Here we denote by $\boldsymbol{p}$ the independent variable of multiplication in $\mathsf{L}^{2}(\mathbb{R}^n;\mathrm{d}\mathbf{p})$. \\
Now we prove \eqref{kinetic} for $n=3$. We obtain by the representation \eqref{Definition_representation_3}  of the upper and lower bispinor of $\mathcal{F}_3\zeta$ and the notation introduced in \eqref{Definition_W_Teil_2} that 
the right hand side of \eqref{Proof_Sigma_P_1} is equal to
\begin{equation} \begin{split}
2\sum\limits_{\substack{(l',m',s')\in \mathfrak{T}_3\\ (l,m,s)\in \mathfrak{T}_3}}
  \Re\left(\big\langle |\boldsymbol{p}|^{-1}\big(\mathcal{F}_3\zeta\big)_{(l',m',s')}^{+}\Omega_{l',m',s'},(\boldsymbol{\sigma}\cdot \boldsymbol{p}) |\boldsymbol{p}|^{-1} \big(\mathcal{F}_3\zeta\big)_{(l,m,s)}^{-} \Omega_{l,m,s}\big\rangle\right)   . \end{split}   \label{Proof_Sigma_P_2}
\end{equation}
The expression in \eqref{Proof_Sigma_P_2} is equal to
\begin{align}
&2\sum\limits_{(l,m,s)\in \mathfrak{T}_3}
  \Re\left(\big\langle \big(\mathcal{F}_3\zeta\big)_{(l+2s,m,-s)}^{+},(\cdot)\big(\mathcal{F}_3\zeta\big)_{(l,m,s)}^{-}\big\rangle\right)      
  \nonumber \\
&=\sum\limits_{(l,m,s)\in \mathfrak{T}_3}
  \left\langle 
         \begin{pmatrix}
                \big(\mathcal{F}_3\zeta\big)_{(l,m,s)}^{+} \\ 
                \big(\mathcal{F}_3\zeta\big)_{T_3(l,m,s)}^{-}
         \end{pmatrix},
         \begin{pmatrix}
             0 & (\cdot ) \\
             (\cdot ) & 0
             \end{pmatrix}
          \begin{pmatrix}
                \big(\mathcal{F}_3\zeta\big)_{(l,m,s)}^{+} \\ 
                \big(\mathcal{F}_3\zeta\big)_{T_3(l,m,s)}^{-}
         \end{pmatrix}
  \right\rangle  \nonumber \\
&=\left\langle
   \mathcal{W}_3\mathcal{F}_3\zeta,\Bigg(\bigoplus_{(l,m,s)\in \mathfrak{T}_3}
             \begin{pmatrix}
             0 & (\cdot ) \\
             (\cdot ) & 0
             \end{pmatrix}\Bigg)
    \mathcal{W}_3\mathcal{F}_3\zeta 
  \right\rangle \label{Proof_Sigma_P_3}               
\end{align}
by the sequential application of  \eqref{sigma_dot_p_spherical_spinors}, \eqref{Definition_W_Teil_1} and \eqref{Definition_Fourier_transform}. Thus the claim of Lemma \ref{Lemma_alpha_dot_impuls} is a consequence of \eqref{Proof_Sigma_P_1}, \eqref{Proof_Sigma_P_2} and \eqref{Proof_Sigma_P_3}.\\
For $n=2$ we obtain \eqref{kinetic} by an analogous procedure, i.e., we 
represent the upper and lower component of $\mathcal{F}_2\psi $ by \eqref{Definition_representation_2} in \eqref{Proof_Sigma_P_0} and perform a calculation, which involves \eqref{sigma_dot_x}. 
\end{proof}

\section{Proof of Theorem \ref{Theorem-Talman-minimax-principle}}\label{Section_Talman}
Let $V\in\mathfrak{P}_{n}$. We use the abstract minimax principle Theorem 1 of \cite{Morozov-Mueller} to prove the Talman minimax principle. We apply the theorem with $q:=\mathtt{d}_{n}$ (quadratic form associated to 
$D_n(0)$), $B:=D_n(V)$ and $\Lambda_{\pm}$ as the projector $T^{\pm}_n$ on the upper and lower $(n-1)$ components of a $2(n-1)$ spinor, i.e., 
\begin{align*}
     T^{+}_n\binom{\varphi}{\psi}=\binom{\varphi}{0}, \qquad T^{-}_n\binom{\varphi}{\psi}=\binom{0}{\psi},\text{ for }\varphi,
     \psi \in \mathsf{L}^2(\mathbb{R}^n;\mathbb{C}^{n-1}).
\end{align*}  
That $D_n(V)$ plays the role of $B$ in \cite{Morozov-Mueller} is a consequence of Theorem 2.1 in \cite{Nenciu} and the following lemma.
\begin{lem}\label{Lemma_form_perturbation}
Let $V\in\mathfrak{P}_n$. Then the quadratic form $\mathtt{v}$ associated to the operator $V$ is a form perturbation of $D_n(0)$ in the sense of Definition 2.1 in \cite{Nenciu}.
\end{lem}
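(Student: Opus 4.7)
My plan is to verify Nenciu's form-perturbation hypothesis by establishing a sharp Kato--Herbst type inequality that captures the full critical range of $\mathfrak{P}_n$. By the definition of $\mathfrak{P}_n$ there exists $\nu \in [0, 1/(4-n))$ with $0 \geq V \geq -\nu|\cdot|^{-1}\otimes \mathbb{I}_{\mathbb{C}^{2(n-1)}}$, and hence $|\mathtt{v}[\psi]| \leq \nu \int_{\mathbb{R}^n} |\psi|^2/|x|\,\mathrm{d}x$ for every $\psi$ in the form domain of $D_n(0)$. Since $|D_n(0)|^2 = 1 - \Delta$, the form domain of $|D_n(0)|^{1/2}$ coincides with $\mathsf{H}^{1/2}(\mathbb{R}^n;\mathbb{C}^{2(n-1)})$, so everything reduces to the inequality
\begin{equation*}
 \int_{\mathbb{R}^n} \frac{|\psi(x)|^2}{|x|}\,\mathrm{d}x \leq (4-n)\,\bigl\langle \psi,\,|D_n(0)|\,\psi\bigr\rangle, \qquad \psi \in \mathsf{H}^{1/2}(\mathbb{R}^n;\mathbb{C}^{2(n-1)}),
\end{equation*}
which would yield $|\mathtt{v}[\psi]| \leq \nu(4-n)\langle \psi, |D_n(0)|\psi\rangle$ with the relative constant $\nu(4-n) < 1$ by the choice of $\nu$, as required by Nenciu's Definition 2.1.

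To prove the displayed inequality I would pass to momentum space and decompose into angular momentum channels using Lemma~\ref{Lemma_Coulomb_Channel_Decomposition}. Under $\mathcal{F}_n$ the operator $|D_n(0)|$ becomes multiplication by $\sqrt{p^2+1}$, so its quadratic form splits as $\sum_{j \in \mathfrak{T}_n} \int_0^\infty \sqrt{p^2+1}\,|(\mathcal{F}_n\psi)_j(p)|^2\,\mathrm{d}p$, while the Coulomb form splits, by Lemma~\ref{Lemma_Coulomb_Channel_Decomposition}, as $\sum_j \mathtt{q}_{\mu(j)}[(\mathcal{F}_n\psi)_j]$ with $\mu(k) = |k|-\frac{1}{2}$ for $n=2$ and $\mu(l,m,s) = l$ for $n=3$. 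The problem thus reduces to the uniform one-dimensional estimate
\begin{equation*}
 \mathtt{q}_\mu[f] \leq (4-n)\int_0^\infty \sqrt{p^2+1}\,|f(p)|^2\,\mathrm{d}p,
\end{equation*}
holding for every channel index $\mu \geq -\frac{1}{2}$ that occurs.

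I expect the main obstacle to be the identification of the sharp constant $4-n$, since a naive application of the classical Kato--Herbst bound with $\sqrt{-\Delta}$ in place of $|D_n(0)|$ yields a larger constant ($\pi/2$ in three dimensions, and a similarly insufficient value in two) and therefore covers only a strict subrange of $\nu < 1/(4-n)$. My plan for this step is to diagonalise $\mathtt{q}_\mu$ by a Mellin transform in the variable $p$, under which its Legendre-function kernel $\pi^{-1} Q_\mu\bigl(\frac{1}{2}(p/q + q/p)\bigr)$ becomes multiplication by an explicit ratio of $\Gamma$-functions of the Mellin variable; comparing this symbol to the (Mellin image of the) positive multiplier $\sqrt{p^2+1}$ reduces the problem to an elementary estimate whose worst case is attained at the smallest admissible $\mu \in \{-\frac{1}{2},0\}$. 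A direct $\Gamma$-function evaluation there produces the constant $4-n$ and closes the argument.
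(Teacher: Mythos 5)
Your plan rests on two misreadings, and the second one is fatal.

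First, Nenciu's Definition~2.1 of a form perturbation does not reduce to ``$\mathtt{v}$ is $|D_n(0)|$-form bounded with relative bound $<1$.'' His condition has two parts: (i) form boundedness of $|V|^{1/2}$ relative to $|D_n(0)|^{1/2}$ with \emph{some} finite constant, and (ii) bounded invertibility of $\mathbb{I}+V^{1/2}(D_n(0)-t)^{-1}V^{1/2}$ for some real $t$ in the resolvent set. Part (i) is cheap and follows from the ordinary Kato--Herbst bound with a non-sharp constant; the whole content of the lemma sits in part (ii), which is about the resolvent $D_n(0)^{-1}$ of the \emph{indefinite} Dirac operator, not about $|D_n(0)|^{-1}$.

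Second, and decisively, the inequality you propose to prove, namely $\int_{\mathbb R^n}|\psi|^2/|x|\,\mathrm d\mathbf x\leq(4-n)\langle\psi,|D_n(0)|\psi\rangle$, is false. Because $|D_n(0)|=\sqrt{1-\Delta}$ and high-frequency trial functions make the mass term negligible, the sharp constant in this inequality is the ordinary Kato--Herbst constant for $\sqrt{-\Delta}$, which is $\pi/2>1=4-n$ when $n=3$ (and $\Gamma(1/4)^2/(2\Gamma(3/4)^2)>2=4-n$ when $n=2$). A Mellin-symbol computation carried through carefully would confirm exactly this: the symbol of $\mathtt q_\mu$ at the worst channel $\mu\in\{-1/2,0\}$ compared against the Mellin image of the multiplier $\sqrt{1+p^2}$ gives a ratio that tends to $\pi/2$ (resp.\ the two-dimensional constant), not $4-n$. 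There is no way to push the constant down to $4-n$ while keeping a positive operator on the right.

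The improvement from $\pi/2$ to $1$ (for $n=3$) is precisely Kato's 1983 observation, and it lives in the \emph{sign structure} of the free Dirac resolvent: the estimate used in the paper is $\|r^{-1/2}D_n(0)^{-1}r^{-1/2}\|\leq 4-n$, where the off-diagonal block form of $D_n(0)^{-1}$ produces cancellations between the upper and lower spinor components. Once one replaces $D_n(0)^{-1}$ by the positive operator $|D_n(0)|^{-1}$, as your reduction does from the very first step, those cancellations are gone and the critical constant is unreachable. So the concrete gap is: you need Kato's inequality with the signed resolvent, feed it into $\|V^{1/2}D_n(0)^{-1}V^{1/2}\|\leq\nu(4-n)<1$ to get invertibility of $\mathbb{I}+V^{1/2}D_n(0)^{-1}V^{1/2}$ by Neumann series, and then invoke Nenciu's Theorem~2.2; a relative form bound against $|D_n(0)|$ simply cannot close the argument in the full range $\nu<1/(4-n)$.
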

\begin{proof}
$V$ is $D_n(0)$ form bounded by the Herbst inequality (see Theorem 2.5 in \cite{Herbst}). Moreover, the inequality
\begin{align*}
                  \|r^{-1/2}D_n(0)^{-1}r^{-1/2}\|\leq 4-n
\end{align*}
holds. This is proven in Section 2 in \cite{Kato} for $n=3$. The same arguments also apply for $n=2$ (see Step 1 in the proof of 
Theorem 1 in \cite{Cuenin-Siedentop}).  Thus 
\begin{align*}
 \|V^{1/2}D_{n}(0)^{-1}V^{1/2}\|\leq \|V^{1/2}r^{1/2}\|^2 \cdot
 \|r^{-1/2}D_{n}(0)^{-1}r^{-1/2}\|<1.
\end{align*}
Hence $1+V^{1/2}D_n(0)^{-1}V^{1/2}$ has a bounded inverse by the Neumann series. \\
Now the claim follows from Theorem 2.2 in \cite{Nenciu} with $A:=D_n(0)$ and $t:=0$.
\end{proof}
Since the assumptions \emph{(i)} and \emph{(ii)} of Theorem 1 in \cite{Morozov-Mueller} are obviously fulfilled, it remains to check assumption \emph{(iii)}. Thus it is enough to find an operator
$L_n:\mathsf{H}^{1/2}(\mathbb{R}^n;\mathbb{C}^{n-1}) \rightarrow \mathsf{H}^{1/2}(\mathbb{R}^n;\mathbb{C}^{n-1})$ such that 
\begin{align*}
               \inf\limits_{\varphi \in \mathsf{H}^{1/2}(\mathbb{R}^n;\mathbb{C}^{n-1})\setminus\{0\} } \frac{\mathtt{d}_{n}\big[\binom{\varphi}{L_n\varphi}\big]+\mathtt{v}\big[\binom{\varphi}{L_n\varphi}\big]}{\big\|\binom{\varphi}{L_n\varphi}\big\|^2}>-1.
\end{align*}
Now we give in three steps an explicit construction of $L_n$ and show that $L_n$ satisfies the requirements.
For $k\in \mathfrak{T}_2$ and $(l,m,s)\in \mathfrak{T}_3$ we define in the first step various constants: 
\begin{align} \label{Definition_c_n}
c_n&:=2(4-n)\frac{\Gamma(\frac{n+1}{4})^2}{\Gamma (\frac{n-1}{4})^2}; \\
c_{2,k}&:=\begin{cases} c_2^{-1}\text{ if }k\in\mathfrak{T}_2^{-}, 
           \\ c_2\text{ if }k\in\mathfrak{T}_2^{+};\end{cases} \label{Definition_c_2_k}\\
c_{3,(l,m,s)}&:=c_3^{2s}.           
\end{align}
In the second step we define the operator $R_n$
 \begin{align} \label{Definition_R}
       R_n: &\bigoplus_{j\in \mathfrak{T}_n}\mathsf{L}^2(\mathbb{R}_{+}) \rightarrow \bigoplus_{j\in \mathfrak{T}_n}\mathsf{L}^2(\mathbb{R}_{+});\ \bigoplus_{j\in \mathfrak{T}_n}\psi_{j}\mapsto\bigoplus_{j\in \mathfrak{T}_n} c_{n,j}  \psi_{T_n^{-1} j} .
\end{align}
Finally we define 
\begin{align} \label{Definition_L}
      L_n:=(\mathcal{U}_n\mathcal{F}_n)^{*}R_n(\mathcal{U}_n\mathcal{F}_n).
\end{align}
The desired properties of $L_n$ are proven in the following lemma:
\begin{lem}\label{Lemma_Talman}
Let $\varphi \in \mathsf{H}^{1/2}(\mathbb{R}^n;\mathbb{C}^{n-1})$ then 
$L_n\varphi \in \mathsf{H}^{1/2}(\mathbb{R}^n;\mathbb{C}^{n-1})$ and the following inequality
\begin{align} \label{equation_lemma_talman}
        \frac{c_n^2-1}{c_n^2+1}\bigg\|\binom{\varphi}{L_n \varphi}\bigg\|^2 \leq
        \mathtt{d}_n\bigg[\binom{\varphi}{L_n \varphi}\bigg]- \frac{1}{4-n}\int_{\mathbb{R}^n}
        \frac{1}{|\mathbf{x}|}\bigg|\binom{\varphi(\mathbf{x})}{\big(L_n \varphi\big)(\mathbf{x})}\bigg|^2\mathrm{d}\mathbf{x}
\end{align}
holds.
\end{lem}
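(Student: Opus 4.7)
The strategy is to diagonalise everything using the angular momentum channel decomposition. Applying $\mathcal{W}_n\mathcal{F}_n$ to $\binom{\varphi}{L_n\varphi}$, one checks from the definition \eqref{Definition_L} of $L_n$ and the channel recipe \eqref{Definition_W_2}--\eqref{Definition_W_Teil_2} that the $j$-th channel component is $\binom{\varphi_j}{c_{n,T_nj}\varphi_j}$, where I write $\varphi_j:=(\mathcal{F}_n\varphi)_j$. In particular, since the $\mathsf{H}^{1/2}$-norm in each channel is the weighted $L^2$-norm with weight $(1+p^2)^{1/2}$ and is insensitive to the relabelling $T_n^{-1}$, the boundedness $L_n:\mathsf{H}^{1/2}\to\mathsf{H}^{1/2}$ is immediate. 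By Lemma \ref{Lemma_alpha_dot_impuls} the kinetic form becomes multiplication by $\big(\begin{smallmatrix}0&p\\p&0\end{smallmatrix}\big)$ in each channel, so together with the mass contribution ($\sigma_3$ or $\beta$) the form $\mathtt{d}_n[\binom{\varphi}{L_n\varphi}]$ decomposes as the sum over $j$ of $\int_0^\infty\!\big((1-c_{n,T_nj}^2)+2c_{n,T_nj}p\big)|\varphi_j(p)|^2\,\mathrm dp$, while $\|\binom{\varphi}{L_n\varphi}\|^2$ decomposes as $\sum_j(1+c_{n,T_nj}^2)\|\varphi_j\|^2$.

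Next I would treat the Coulomb term using Lemma \ref{Lemma_Coulomb_Channel_Decomposition}: applied to each of the two Dirac spinor components it decomposes into the sums $\sum_j\mathtt{q}_{a(j)}[\varphi_j]$ and $\sum_j c_{n,T_nj}^2\,\mathtt{q}_{b(j)}[\varphi_j]$ for appropriate Legendre indices $a(j),b(j)$ (obtained after a re-indexing by $T_n$). The inequality \eqref{equation_lemma_talman} therefore reduces to the channel-wise scalar statement
\begin{equation*}
\int_0^\infty\!\!\big((1-c^2)+2cp\big)|f|^2\,\mathrm dp-\tfrac1{4-n}\big(\mathtt q_{a}[f]+c^2\mathtt q_{b}[f]\big)\ \geq\ \tfrac{c_n^2-1}{c_n^2+1}(1+c^2)\|f\|^2
\end{equation*}
with $c=c_{n,T_nj}\in\{c_n^{\pm1}\}$ and $(a,b)$ depending on the channel. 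This is a one-dimensional inequality on $\mathsf L^2(\mathbb R_+)$ where every operator is either a Mellin multiplier (the forms $\mathtt q_j$ have explicit Mellin symbols $\tfrac12 B\bigl(\tfrac{j+1+\mathrm it}{2},\tfrac{j+1-\mathrm it}{2}\bigr)$ via the integral \eqref{Q}) or a shift of order one (multiplication by $p$). The plan is to conjugate by the Mellin transform and verify the inequality pointwise in the dual variable.

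At this point the inequality becomes a comparison between an explicit linear-in-$p$ diagonal symbol and a ratio of Gamma functions. The optimal value of $c$ is obtained by equating the two relevant $\mathtt q$-symbols; this is exactly what fixes the constant $c_n$ in \eqref{Definition_c_n}, and the choice $c_{n,T_nj}=c_n^{\pm1}$ of \eqref{Definition_c_2_k} ensures that in every channel we land on one of the two optimal branches. The lower bound $(c_n^2-1)/(c_n^2+1)$ then comes out as the worst channel-wise eigenvalue.

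The main obstacle is the final Mellin-level check: one must verify that, uniformly over all Legendre indices $a(j),b(j)$ that actually occur, the pointwise inequality with the constant $c_n$ of \eqref{Definition_c_n} holds. For $n=3$ this is essentially the sharp Kato inequality $\|r^{-1/2}D_3(0)^{-1}r^{-1/2}\|\leq 1$ with the correct constant; for $n=2$ it requires an analogous computation which reduces to the ratio $\Gamma(3/4)^2/\Gamma(1/4)^2$ entering $c_2$. Once the Mellin comparison has been settled in each channel, summing the channel-wise inequalities yields \eqref{equation_lemma_talman}.
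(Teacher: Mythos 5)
Your channel reduction is exactly the paper's: the $j$-th component of $\mathcal W_n\mathcal F_n\binom{\varphi}{L_n\varphi}$ is indeed $\binom{\varphi_j}{c_{n,T_nj}\varphi_j}$, and your decompositions of $\mathtt d_n$, of the norm, and of the Coulomb form via Lemma~\ref{Lemma_Coulomb_Channel_Decomposition} are correct. The difficulty is in how you propose to close the resulting one-dimensional inequality.

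First, the step ``conjugate by the Mellin transform and verify the inequality pointwise in the dual variable'' is not available for the channel inequality as you have written it. The forms $\mathtt q_a$, $\mathtt q_b$ and $\|f\|^2$ are Mellin multipliers (after the substitution $g=\sqrt p\,f$), but $\int_0^\infty p|f|^2\,\mathrm dp=\|g\|^2$ and $\|f\|^2=\|p^{-1/2}g\|^2$ differ by a multiplication by $p^{-1}$, which is a \emph{shift}, not a Mellin multiplier. So the quadratic form on the left of your channel inequality is not simultaneously diagonalised by the Mellin transform, and there is no ``dual variable'' in which a pointwise comparison would be meaningful. What one actually has to do is \emph{first} bound the Coulomb contribution by $\mathtt p[f]$; with $c_{n,T_nj}=c_n^{\pm1}$ the resulting coefficient of $\mathtt p[f]$ equals exactly the coefficient $2c_{n,T_nj}$ of the off-diagonal kinetic term, so all $p$-dependence cancels, and what remains is the purely numerical inequality $(1-c^2)\geq\frac{c_n^2-1}{c_n^2+1}(1+c^2)$ (an identity for $c=c_n^{-1}$, and a genuine inequality for $c=c_n$ since $c_n<1$). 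Only the step $\mathtt q_j\leq\mathrm{const}\cdot\mathtt p$ is a Mellin statement; the rest is elementary.

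Second, you flag the need for uniformity over ``all Legendre indices that actually occur'' but do not say how to obtain it. The paper reduces every channel to the two lowest Legendre indices via the monotonicity $\mathtt q_{k+1}\leq\mathtt q_{k}$ and $\mathtt q_{k+1/2}\leq\mathtt q_{k-1/2}$ (part of \eqref{inequalities}, cited from Bouzouina and Evans--Perry--Siedentop), and then uses the sharp bounds $\mathtt q_0\leq c_3^{-1}\mathtt p$, $\mathtt q_1\leq c_3\mathtt p$ (resp.\ $\mathtt q_{\mp 1/2}\leq 2c_2^{\mp1}\mathtt p$ for $n=2$). Without the monotonicity step you would have to prove a separate sharp Mellin bound for every Legendre index, and the constants in those would \emph{not} all equal $c_n^{\pm1}$ --- so the uniformity you need is precisely what the monotonicity delivers. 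Supplying these two ingredients (the monotonicity reduction, and the $\mathtt q\leq c\,\mathtt p$ cancellation in place of a ``pointwise Mellin check'') turns your outline into the paper's proof.
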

 \begin{proof}
We recall that 
\begin{align*}
            \mathsf{H}^{1/2}(\mathbb{R}^n)=\{\psi \in \mathsf{L}^2(\mathbb{R}^n): (1+|\cdot|^2)^{1/4}\mathcal{F}_n\psi 
            \in \mathsf{L}^2(\mathbb{R}^n)\}.
\end{align*}
Thus the unitarity of $\mathcal{U}_n$ implies
\begin{align} \label{H_1_2-character}
            \mathsf{H}^{1/2}(\mathbb{R}^n)=
            \{\psi \in \mathsf{L}^2(\mathbb{R}^n): 
            \bigoplus\limits_{j\in\mathfrak{T}_n}(1+(\cdot)^2)^{1/4} \big(\mathcal{F}_n\psi\big)_{j} 
            \in \bigoplus\limits_{j\in\mathfrak{T}_n}\mathsf{L}^2(\mathbb{R}_{+})\}.
\end{align}
Moreover we observe that the operator $R_n$ is bounded, which together with \eqref{H_1_2-character} and \eqref{Definition_L} implies that $L_n \varphi \in \mathsf{H}^{1/2}(\mathbb{R}^n)$.\\ 
Now we define the quadratic form $\mathtt{p}$ on $\mathsf{L}^2(\mathbb{R}_{+},(1+p^2)^{1/2}\mathrm{d}p)$ by 
\begin{align*}
                         \mathtt{p}[\chi]:=\int\limits_{0}^{\infty} p |\chi(p)|^2\mathrm{d}p.
\end{align*}
For the proof of \eqref{equation_lemma_talman} we recall that the quadratic form \eqref{q_m} satisfy the inequalities
\begin{equation}
\begin{split} \label{inequalities}
            \mathtt{q}_{k+1/2}[\zeta]&\leq \mathtt{q}_{k-1/2}[\zeta];  \\
            \mathtt{q}_{k+1}[\zeta]&\leq \mathtt{q}_{k}[\zeta]; \\
            \mathtt{q}_{0}[\zeta]&\leq c_3^{-1} \mathtt{p}[\zeta], \quad  \mathtt{q}_{1}[\zeta]\leq c_3 \mathtt{p}[\zeta]; \\
            \mathtt{q}_{-1/2}[\zeta]&\leq 2 c_2^{-1} \mathtt{p}[\zeta], \quad  \mathtt{q}_{1/2}[\zeta]\leq 2 c_2 \mathtt{p}[\zeta];
\end{split}
\end{equation}
for $k\in \mathbb{N}_{0}$ and $\zeta \in \mathsf{L}^2(\mathbb{R}_{+},(1+p^2)^{1/2}\mathrm{d}p)$ (see \cite{Bouzouina} and \cite{Evans-Perry-Siedentop}). \\
By Lemma \ref{Lemma_Coulomb_Channel_Decomposition} we obtain
 \begin{align} 
 \begin{split}\label{proof_talman_1}
  &\int_{\mathbb{R}^n} \frac{|\varphi(\mathbf{x})|^2}{|\mathbf{x}|} \mathrm{d}\mathbf{x} =
  \begin{cases} 
  \sum\limits_{k\in \mathfrak{T}_2} \mathtt{q}_{|k|-1/2}\big[(\mathcal{F}_2\varphi )_{k}\big]
  \text{ if }n=2;\\
  \sum\limits_{(l,m,s)\in \mathfrak{T}_3} \mathtt{q}_{l}
  \big[(\mathcal{F}_3\varphi )_{(l,m,s)}\big]  
  \text{ if }n=3;
  \end{cases} 
    \end{split}
  \end{align}
and by  \eqref{Definition_c_2_k} - \eqref{Definition_L} 
  \begin{align}
  \begin{split}\label{proof_talman_2}
  &\int_{\mathbb{R}^n} \frac{|(L_n\varphi)(\mathbf{x})|^2}{|\mathbf{x}|}\mathrm{d}\mathbf{x} 
  \\
  &=
  \begin{cases}  
  \sum\limits_{k\in \mathfrak{T}^{+}_2} c_{2}^{2} 
  \mathtt{q}_{|k|-\frac{1}{2}} \big[(\mathcal{F}_2\varphi)_{k-1}\big]
  +\sum\limits_{k\in\mathfrak{T}^{-}_2}c_{2}^{-2} \mathtt{q}_{|k|-\frac{1}{2}} 
  \big[(\mathcal{F}_2\varphi )_{k-1}\big]\text{ if }n=2;\\
  \sum\limits_{(l,m,\frac{1}{2})\in \mathfrak{T}^{+}_3}\hspace*{-5mm} c_{3}^{2} 
  \mathtt{q}_{l} \big[(\mathcal{F}_3\varphi )_{(l+1,m,-\frac{1}{2})}\big]+
  \sum\limits_{(l,m,-\frac{1}{2})\in \mathfrak{T}^{-}_3} \hspace*{-5mm}
  c_{3}^{-2} \mathtt{q}_{l} \big[(\mathcal{F}_3\varphi )_{(l-1,m,\frac{1}{2})}\big] \text{ if }n=3.
  \end{cases}
  \end{split}
 \end{align}
 Note that $(l,m,s)\in \mathfrak{T}^{-}_3$ implies $l\in\mathbb{N}$. Hence \eqref{inequalities} implies that the right hand sides of \eqref{proof_talman_1} can be estimated by 
  \begin{align} \label{proof_talman_3}
  (4-n)\left(\sum\limits_{j\in \mathfrak{T}_{n}^{+}} c_n^{-1} 
  \mathtt{p}\big[(\mathcal{F}_n\varphi )_{j}\big]+ 
  \sum\limits_{j\in \mathfrak{T}_{n}^{-}} c_n  \mathtt{p}\big[(\mathcal{F}_n\varphi 
  )_{j}\big]\right);
 \end{align} 
 and the right hand side of \eqref{proof_talman_2} by
 \begin{align} \label{proof_talman_4}
   (4-n)\left(\sum\limits_{j\in \mathfrak{T}_{n}^{+}} c_n
  \mathtt{p}\big[(\mathcal{F}_n\varphi )_{T_n^{-1} j}\big]+ 
  \sum\limits_{j\in \mathfrak{T}_{n}^{-}} c_n^{-1}   \mathtt{p}\big[(\mathcal{F}_n\varphi 
  )_{T_n^{-1} j}\big]\right).
 \end{align}
 By $T_n(\mathfrak{T}_{n}^{\pm})=\mathfrak{T}_{n}^{\mp}$ we conclude that \eqref{proof_talman_4} is equal to \eqref{proof_talman_3}. This together with the relation
 \begin{align*}
 (\mathcal{F}_n L_n\varphi)_{T_n j}=c_{n,T_n j}(\mathcal{F}_n\varphi)_j\text{ for all }j\in\mathfrak{T}_n,
 \end{align*}
 implies
 \begin{align}\label{proof_talman_5}\begin{split}
 &\frac{1}{4-n}\int_{\mathbb{R}^n}\frac{1}{|\mathbf{x}|}\bigg|\binom{\varphi(\mathbf{x})}{(L_n \varphi)(\mathbf{x})}\bigg|^2 \mathrm{d}\mathbf{x}\leq  \\
 & \sum\limits_{j\in \mathfrak{T}_{n}}\quad \int\limits_{\mathbb{R}_{+}}  \left\langle 
 \begin{pmatrix}  
 \big(\mathcal{F}_n\varphi\big)_{j}(p) \\ \big(\mathcal{F}_n L_n\varphi\big)_{T_n j}(p) \end{pmatrix},
 \begin{pmatrix}
 0 & p \\ p & 0
 \end{pmatrix}
  \begin{pmatrix} 
 \big(\mathcal{F}_n\varphi\big)_{j}(p) \\ \big(\mathcal{F}_n L_n\varphi\big)_{T_n j}(p)
 \end{pmatrix}
 \right\rangle_{\mathbb{C}^2} \mathrm{d}p.
 \end{split}
 \end{align}
 A straightforward calculation using \eqref{Definition_c_2_k} - \eqref{Definition_L} gives
 \begin{align}\begin{split}\label{proof_talman_6}
 &\bigg\langle\binom{\varphi}{L_n \varphi},\begin{pmatrix} \mathbb{I}_{\mathbb{C}^{n-1}} &  0 \\  0 &  \mp \mathbb{I}_{\mathbb{C}^{n-1}} \end{pmatrix}\binom{\varphi}{
 L_n \varphi}\bigg\rangle\\
 &=\big(1\mp c_n^{-2}\big)\sum\limits_{j\in \mathfrak{T}^{+}_n } \|\big(\mathcal{F}_n
 \varphi\big)_{j}\|^2 +
 \big(1\mp c_n^{2}\big)\sum\limits_{j\in \mathfrak{T}^{-}_n } \|\big(\mathcal{F}_n
 \varphi\big)_{j}\|^2.
 \end{split}
 \end{align}
 By Lemma \ref{Lemma_alpha_dot_impuls} we know that the right hand side of Relation 
 \eqref{proof_talman_5} plus the minus case of the left hand side of \eqref{proof_talman_6} is equal to   $\mathtt{d}_n\big[\binom{\varphi}{L_n \varphi}\big]$. Thus we obtain \eqref{equation_lemma_talman}  by 
 \eqref{proof_talman_5} and \eqref{proof_talman_6}.
 \end{proof}
\section{Proof of Theorem \ref{Theorem_Esteban_Sere}}\label{Section_Esteban-Sere}
We proceed analogously to the proof of Theorem \ref{Theorem-Talman-minimax-principle}. Thus it is enough to find an operator 
$G_n : P_{n}^{+}\mathsf{H}^{1/2}(\mathbb{R}^n;\mathbb{C}^{2(n-1)}) \rightarrow P_{n}^{-}\mathsf{H}^{1/2}(\mathbb{R}^n;\mathbb{C}^{2(n-1)})$ such that 
\begin{align} \label{condition_esteban_sere}
               \inf\limits_{\varphi \in P_{n}^{+}\mathsf{H}^{1/2}(\mathbb{R}^n;\mathbb{C}^{2(n-1)})\setminus\{0\} } \frac{\mathtt{d}_{n}\big[\varphi+G_n\varphi\big]+\mathtt{v}\big[\varphi+G_n\varphi\big]}{\big\|\varphi+G_n\varphi\big\|^2}>-1
\end{align} 
holds. In the following lemma we prove that a possible choice of $G_n$ is 
\begin{align} \label{Definition_G}
      G_n:=(\mathcal{W}_n\mathcal{F}_n)^{*}E_n(\mathcal{W}_n\mathcal{F}_n),
\end{align}
with
\begin{align}
 \label{Definition_E}
     E_n: \bigoplus_{j\in \mathfrak{T}_n}
           \mathsf{L}^2(\mathbb{R}_{+};\mathbb{C}^2) &\rightarrow \bigoplus_{j \in \mathfrak{T}_n}\mathsf{L}^2(\mathbb{R}_{+};
           \mathbb{C}^2);\\
  \bigoplus_{j\in \mathfrak{T}_n}\Psi_{j}& \mapsto
         \bigoplus_{j\in \mathfrak{T}_n}\frac{1-c_{n,j}
         (\cdot)+\sqrt{1+(\cdot)^2}}
         {c_{n,j}+(\cdot)+c_{n,j}\sqrt{1+(\cdot)^2}}\begin{pmatrix}
         0 & -1 \\ 1 & 0 
         \end{pmatrix}\Psi_{j}.      
\end{align}
\begin{lem}
\label{Lemma_Esteban_Sere}
Let $\varphi \in P_{n}^{+} \mathsf{H}^{1/2}(\mathbb{R}^n;\mathbb{C}^{2(n-1)})$ then $G_n\varphi \in P_{n}^{-} \mathsf{H}^{1/2}(\mathbb{R}^n;\mathbb{C}^{2(n-1)})$ and the relation
\begin{align}\label{Esteban_Sere_Talman}
               L_n(\varphi+G_n\varphi)_1=(\varphi+G_n\varphi)_2
\end{align}
holds. 
\end{lem}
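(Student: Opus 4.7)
The plan is to work entirely in the $\mathcal W_n\mathcal F_n$ picture. By Lemma \ref{Lemma_alpha_dot_impuls}, together with the observation that under $\mathcal W_n$ the mass term ($\sigma_3$ for $n=2$, $\beta$ for $n=3$) becomes $\mathrm{diag}(1,-1)$ on each two-component block, the operator $D_n(0)$ turns into the direct sum over $j\in\mathfrak T_n$ of multiplication by $\bigl(\begin{smallmatrix}1 & p\\ p & -1\end{smallmatrix}\bigr)$. Fibrewise diagonalisation yields eigenvalues $\pm\sqrt{1+p^2}$ with eigenvectors proportional to $\binom{1}{\alpha(p)}$ and $\binom{-\alpha(p)}{1}$, where $\alpha(p):=p/(1+\sqrt{1+p^2})$. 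Consequently, a function $\varphi$ lies in $P_n^+\mathsf H^{1/2}(\mathbb R^n;\mathbb C^{2(n-1)})$ if and only if $\mathcal W_n\mathcal F_n\varphi\big|_j=\binom{f_j}{\alpha f_j}$ for all $j\in\mathfrak T_n$; and membership in $P_n^-\mathsf H^{1/2}$ is characterised analogously by fibres of the form $\binom{-\alpha g_j}{g_j}$.

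Applying $E_n$ to $\binom{f_j}{\alpha f_j}$, the off-diagonal matrix produces $\binom{-\alpha f_j}{f_j}$, which is already of the $P_n^-$-form, and the scalar prefactor
\[
\beta_j(p):=\frac{1-c_{n,j}p+\sqrt{1+p^2}}{c_{n,j}+p+c_{n,j}\sqrt{1+p^2}}
\]
merely rescales $f_j$. Since $c_{n,j}>0$, the multiplier $\beta_j$ is uniformly bounded on $\mathbb R_+$ (the denominator grows linearly and is bounded away from zero), so the characterisation \eqref{H_1_2-character}, applied componentwise to the $2(n-1)$ components, is preserved. This already gives $G_n\varphi\in P_n^-\mathsf H^{1/2}(\mathbb R^n;\mathbb C^{2(n-1)})$.

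For the identity \eqref{Esteban_Sere_Talman}, unwinding \eqref{Definition_W_2}, \eqref{Definition_W_Teil_1} and \eqref{Definition_U_n} produces, for any $\Phi=\binom{\Phi_1}{\Phi_2}$, the bookkeeping relation $\mathcal W_n\mathcal F_n\Phi\big|_j=\bigl((\mathcal U_n\mathcal F_n\Phi_1)_j,\,(\mathcal U_n\mathcal F_n\Phi_2)_{T_n j}\bigr)^{\top}$. Taking $\Phi=\varphi+G_n\varphi$ and using the previous step yields $(\mathcal U_n\mathcal F_n\Phi_1)_j=(1-\beta_j\alpha)f_j$ and $(\mathcal U_n\mathcal F_n\Phi_2)_{T_n j}=(\alpha+\beta_j)f_j$. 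Evaluating $L_n\Phi_1$ through \eqref{Definition_L} and \eqref{Definition_R} and substituting $k:=T_n^{-1}j$, the claim \eqref{Esteban_Sere_Talman} reduces to the single pointwise scalar identity
\[
c_{n,T_n k}\bigl(1-\beta_k(p)\alpha(p)\bigr)=\alpha(p)+\beta_k(p).
\]
Inspection of \eqref{Definition_c_2_k} and of $c_{3,(l,m,s)}=c_3^{2s}$ shows $c_{n,T_n k}=c_{n,k}^{-1}$, so with $c:=c_{n,k}$ and $s:=\sqrt{1+p^2}$, and using $p^2+(1+s)^2=2s(1+s)$, both sides collapse to $2s/\bigl(c(1+s)+p\bigr)$. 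The main obstacle is aligning the index shifts $T_n^{\pm 1}$ and the scalars $c_{n,j}$, $c_{n,T_n j}$ when translating between the $\mathcal U_n$- and $\mathcal W_n$-decompositions; once this bookkeeping is straight, the lemma collapses onto the single one-parameter algebraic identity above.
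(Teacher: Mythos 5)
Your proof is correct and follows essentially the same route as the paper's: pass to the $\mathcal W_n\mathcal F_n$ fibre picture, characterise $P_n^\pm\mathsf H^{1/2}$ via the eigenvectors of $\bigl(\begin{smallmatrix}1 & p\\ p & -1\end{smallmatrix}\bigr)$, apply $\mathbb I+E_n$ fibrewise, and match the resulting two components against the action of $R_n$. The paper presents the final algebra by exhibiting $(\mathbb I+E_n)\mathcal W_n\mathcal F_n\varphi\big|_j=\binom{\tilde\chi_j}{c_{n,T_nj}\tilde\chi_j}$ directly, whereas you isolate the equivalent pointwise identity $c_{n,T_nk}\bigl(1-\beta_k\alpha\bigr)=\alpha+\beta_k$ and verify it, but the computation is the same.
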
 
\begin{rem}
By Lemma \ref{Lemma_Talman} and Relation \eqref{Esteban_Sere_Talman} we conclude 
\eqref{condition_esteban_sere}.
\end{rem}
\begin{proof}[Proof of Lemma \ref{Lemma_Esteban_Sere}]
By Lemma \ref{Lemma_alpha_dot_impuls} we deduce that $\psi \in P^{\pm}_{n}\mathsf{H}^{1/2}(\mathbb{R}^n;\mathbb{C}^{2(n-1)})$ if and only if there exists $\bigoplus\limits_{j\in\mathfrak{T}_n} \zeta_{j}
 \in  \bigoplus\limits_{j\in\mathfrak{T}_n} \mathsf{L}^2(\mathbb{R}_{+};(1+p^2)^{1/2}\mathrm{d}p)$ such that 
\begin{align}\label{spectral_subspace}
      \big(\mathcal{W}_{n} \mathcal{F}_{n}\psi\big)_{j}(p)=
      \begin{cases}\zeta_{j}(p)
      \begin{pmatrix} 1 \\
                      \frac{p}{1+\sqrt{1+p^2}}
      \end{pmatrix} \text{ (''+'' case)};\\
      \zeta_{j}(p)     
      \begin{pmatrix} 
      \frac{-p}{1+\sqrt{1+p^2}} \\
      1
      \end{pmatrix} \text{ (''-'' case)};
      \end{cases}
\end{align}
holds for every $j\in\mathfrak{T}_n$ and $p\in\mathbb{R}_{+}$. Hence we get $G_n\varphi \in P_{n}^{-} \mathsf{H}^{1/2}(\mathbb{R}^n;\mathbb{C}^{2(n-1)})$.\\
By  \eqref{spectral_subspace},\eqref{Definition_E} we obtain that there exists $\bigoplus\limits_{j\in\mathfrak{T}_n}\chi_{j} \in  \bigoplus\limits_{j\in\mathfrak{T}_n} \mathsf{L}^2(\mathbb{R}_{+};(1+p^2)^{1/2}\mathrm{d}p)$ such that
\begin{align*}
      \big(\mathcal{W}_{n} \mathcal{F}_{n}\varphi\big)_{j}(p)=\chi_{j}(p)
      \begin{pmatrix} 1 \\
                      \frac{p}{1+\sqrt{1+p^2}}
      \end{pmatrix} 
\end{align*}
and
\begin{equation}\label{Derivation_ELT}
\begin{split}
&\big((\mathbb{I}+E_n)\mathcal{W}_n\mathcal{F}_n\varphi\big)_{j}=\begin{pmatrix}\tilde{\chi}_{j}\\
c_{n,T_n j}\tilde{\chi}_{j}
\end{pmatrix}\text{ with }\\
&\tilde{\chi}_{j}(p):=\frac{c_{n,j}\Big(p^2+(1+\sqrt{1+p^2})^2\Big)}{(1+\sqrt{1+p^2})(c_{n,j}+p+c_{n,j}\sqrt{1+p^2})}\chi_{j}(p)\text{ for }p\in \mathbb{R}_{+},
\end{split}
\end{equation}
hold  for every $j\in\mathfrak{T}_n$. Hence we get by \eqref{Definition_G},\eqref{Definition_R} and \eqref{Definition_L} the 
relation 
\begin{align*}
     \varphi + G_n \varphi 
     =(\mathcal{W}_n\mathcal{F}_n)^{*}\bigoplus\limits_{j\in\mathfrak{T}_n}
     \begin{pmatrix}\tilde{\chi}_{j}\\
     c_{n,T_n j}\tilde{\chi}_{j}
     \end{pmatrix}
     =\begin{pmatrix}
     (\mathcal{U}_n\mathcal{F}_n)^{*} \bigoplus\limits_{j\in\mathfrak{T}_n}\tilde{\chi}_{j} \\
     L_n (\mathcal{U}_n\mathcal{F}_n)^{*} \bigoplus\limits_{j\in\mathfrak{T}_n}\tilde{\chi}_{j}
     \end{pmatrix}.
\end{align*}
Thus we have proven Relation \eqref{Esteban_Sere_Talman}.
\end{proof}
\section{Proof of Theorem \ref{Theorem-Hardy-Dirac}}\label{Section_Hardy_Dirac}
Since the right hand side of \eqref{Hardy_Dirac_Inequality} is continuous in the $\mathsf{H}^{1}(\mathbb{R}^n;\mathbb{C}^{n-1})$ norm (see Theorem 2.5 in \cite{Herbst}), we can assume that 
 $\varphi \in \mathsf{C}_{0}^{\infty}(\mathbb{R}^n\setminus \{0\};\mathbb{C}^{n-1})\setminus \{0\}$ by the density of  $\mathsf{C}_{0}^{\infty}(\mathbb{R}^n\setminus \{0\};\mathbb{C}^{n-1})$ in $ \mathsf{H}^{1}(\mathbb{R}^n;\mathbb{C}^{n-1})$.\\
By the application of Theorem \ref{Theorem-Talman-minimax-principle}  we obtain
\begin{align} 
 \lambda(v)&\leq \sup\limits_{\psi \in \mathsf{H}^{1}(\mathbb{R}^n\setminus \{0\};\mathbb{C}^{n-1})}I_{n,v,\varphi}(\psi)\text{ with } \label{Talman_Hardy_Dirac1}\\
 I_{n,v,\varphi}&:\mathsf{H}^{1}(\mathbb{R}^n\setminus \{0\};\mathbb{C}^{n-1}) \rightarrow \mathbb{R};\\
 I_{n,v,\varphi}(\psi)&:=
 \frac{\bigg\langle\begin{pmatrix}\varphi \\ \psi\end{pmatrix},\begin{pmatrix} (1+v)\otimes \mathbb{I}_{\mathbb{C}^{n-1}} &  K_n \\  K_n  & (-1+v)\otimes \mathbb{I}_{\mathbb{C}^{n-1}}\end{pmatrix}\begin{pmatrix}\varphi \\ \psi\end{pmatrix}\bigg\rangle}{\bigg\|\begin{pmatrix}\varphi \\ \psi\end{pmatrix}\bigg\|^2} .
 \label{Talman_Hardy_Dirac2}
 \end{align}
 Note that we calculate the suprema in \eqref{Talman_Hardy_Dirac1} over $\mathsf{H}^{1}(\mathbb{R}^n\setminus \{0\};\mathbb{C}^{n-1})$ instead of $\mathsf{H}^{1/2}(\mathbb{R}^n;\mathbb{C}^{n-1})$. 
 This is justified by a density argument, which makes use of the form boundedness of $v\otimes \mathbb{I}_{\mathbb{C}^{2(n-1)}}$ with respect to $D_n(0)$ (see Lemma \ref{Lemma_form_perturbation}) and the density of $\mathsf{H}^{1}(\mathbb{R}^n\setminus \{0\};\mathbb{C}^{n-1})$ in $\mathsf{H}^{1/2}(\mathbb{R}^n;\mathbb{C}^{n-1})$.\\
 Thus the proof of Theorem \ref{Theorem-Hardy-Dirac} basically follows from the following lemma.
\begin{lem} \label{Lemma_I_sup_condition}
We define
\begin{align*}
J_{n,v,\varphi}:(-1,\infty) &\rightarrow \mathbb{R};\\
 J_{n,v,\varphi}(\lambda)&:=
\int\limits_{\mathbb{R}^n}\left(\frac{|K_n\varphi(\mathbf{x})|^2}{1+\lambda -v(\mathbf{x})}+\big(1-\lambda+v(\mathbf{x})\big)|\varphi(\mathbf{x})|^2\right)\mathrm{d}\mathbf{x}.
 \end{align*}
For $\lambda \in (-1,\infty)$, $J_{n,v,\varphi}(\lambda)\leq 0$ implies
 \begin{align*} 
 \sup\limits_{\psi \in \mathsf{H}^{1}(\mathbb{R}^n\setminus \{0\};\mathbb{C}^{n-1})}I_{n,v,\varphi}(\psi)
 \leq \lambda.
 \end{align*}
\end{lem}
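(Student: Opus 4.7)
The plan is to reduce the conclusion to a Cauchy--Schwarz argument after rewriting $I_{n,v,\varphi}(\psi)\leq\lambda$ as a positivity condition in $\psi$. Setting $w(\mathbf{x}):=1+\lambda-v(\mathbf{x})$, the hypotheses $v\leq 0$ and $\lambda>-1$ ensure $w\geq 1+\lambda>0$ everywhere, and $I_{n,v,\varphi}(\psi)\leq\lambda$ is equivalent (by positivity of the denominator) to
\begin{equation*}
N(\psi):=\int_{\mathbb{R}^{n}}\bigl((1-\lambda+v)|\varphi|^{2}+2\Re\langle\varphi,K_{n}\psi\rangle-w|\psi|^{2}\bigr)\,\mathrm{d}\mathbf{x}\leq 0.
\end{equation*}
Since $-\int w|\psi|^{2}$ is strictly negative definite, $N$ is strictly concave in $\psi$ and admits a clean completing-the-square bound.

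The main step is Cauchy--Schwarz followed by AM--GM on the cross term: transferring $K_{n}$ onto $\varphi$ via the adjoint identity built into the Dirac quadratic form and then splitting $\overline{K_{n}\varphi}\,\psi = (\overline{K_{n}\varphi}/\sqrt{w})(\sqrt{w}\,\psi)$ yields
\begin{equation*}
2\Re\langle\varphi,K_{n}\psi\rangle\leq \int_{\mathbb{R}^{n}}\frac{|K_{n}\varphi|^{2}}{w}\,\mathrm{d}\mathbf{x}+\int_{\mathbb{R}^{n}}w|\psi|^{2}\,\mathrm{d}\mathbf{x}.
\end{equation*}
When this is substituted into $N(\psi)$, the positive $\int w|\psi|^{2}$ cancels exactly against the $-\int w|\psi|^{2}$ already present, and we are left with
\begin{equation*}
N(\psi)\leq\int_{\mathbb{R}^{n}}(1-\lambda+v)|\varphi|^{2}\,\mathrm{d}\mathbf{x}+\int_{\mathbb{R}^{n}}\frac{|K_{n}\varphi|^{2}}{w}\,\mathrm{d}\mathbf{x}=J_{n,v,\varphi}(\lambda)\leq 0.
\end{equation*}
Taking $\sup_{\psi}$ finishes the proof.

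The only step meriting care is the rigorous justification of the adjoint pairing and of the Cauchy--Schwarz bound at the available $\mathsf{H}^{1}$ regularity. The density argument preceding the lemma allows us to assume $\varphi\in\mathsf{C}_{0}^{\infty}(\mathbb{R}^{n}\setminus\{0\};\mathbb{C}^{n-1})$, so $K_{n}\varphi$ is smooth and compactly supported away from the origin; combined with the uniform lower bound $w\geq 1+\lambda>0$ and the uniform upper bound $1/w\leq(1+\lambda)^{-1}$, this places both $K_{n}\varphi/\sqrt{w}$ and $\sqrt{w}\,\psi$ in $\mathsf{L}^{2}$ and makes every manipulation unambiguous, even in the presence of the Coulomb singularity of $v$. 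Hence the entire argument is algebraic once this setup is in place.
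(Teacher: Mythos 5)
Your proof is correct, and it is essentially the same argument as the paper's: both complete the square in $\psi$. The paper makes the optimizer $\psi_0 = K_n\varphi/(1+\lambda-v)$ explicit, writes an arbitrary $\psi$ as $\psi_0+\zeta$, and observes that the cross terms vanish leaving $J_{n,v,\varphi}(\lambda)-\langle\zeta,(1+\lambda-v)\zeta\rangle$; your Cauchy--Schwarz/AM--GM step is exactly the inequality $\int(1+\lambda-v)\,|\psi-\psi_0|^2\geq 0$ in disguise, so the two derivations coincide.
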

\begin{proof}
We introduce 
\begin{align}
\psi_{n,v,\varphi}:(-1,\infty) \rightarrow \mathsf{H}^{1}(\mathbb{R}^n\setminus \{0\};\mathbb{C}^{n-1}); 
\quad \psi_{n,v,\varphi}(\lambda):=\frac{K_n\varphi}{1+\lambda -v}.
\end{align}
For every $\zeta \in \mathsf{H}^{1}(\mathbb{R}^n\setminus \{0\};\mathbb{C}^{n-1})$ the inequality 
\begin{align*}
&\big(I_{n,v,\varphi}\big(\psi_{n,v,\varphi}(\lambda)+\zeta\big)-\lambda\big)
\big(\|\varphi\|^2+\|\psi_{n,v,\varphi}(\lambda)+\zeta\|^2\big)\\
 &=J_{n,v,\varphi}(\lambda)
 +2\Re\langle \zeta,  K_n \varphi-(1+\lambda-v)\psi_{n,v,\varphi}(\lambda)\rangle+ \\
  &\langle K_n\varphi-(1+\lambda-v)\psi_{n,v,\varphi}(\lambda),
 \psi_{n,v,\varphi}(\lambda)\rangle - \langle \zeta, (1+\lambda-v)\zeta\rangle \leq J_{n,v,\varphi}(\lambda)
\end{align*}
holds, and thus we conclude the claim.
\end{proof}
By Lemma \ref{Lemma_I_sup_condition} and \eqref{Talman_Hardy_Dirac1} we obtain 
\begin{align} \label{J_epsilon}
         J_{n,v,\varphi}\big(\lambda(v)-\varepsilon\big)> 0 \text{ for }\varepsilon 
         \in \big(0,1+\lambda(v)\big).
\end{align}
Letting $\varepsilon \searrow 0$ in \eqref{J_epsilon} we obtain Theorem \ref{Theorem-Hardy-Dirac}.
\section{Proof of Theorem \ref{Theorem_operator_core}}\label{Section_Operator_Core}
The proof is based on:
\begin{lem}\label{Lemma_essential_self_adjoint}
Let $\nu\in [0,1/(4-n)]$. The restriction of $\big(\tilde{D}_n(-\nu/ |\cdot|\otimes \mathbb{I}_{\mathbb{C}^{2(n-1)}})\big)^{*}$ 
to $\mathfrak{C}_{n}^{\nu}$ is essentially self-adjoint.
\end{lem}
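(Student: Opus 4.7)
The plan is to use the angular momentum channel decomposition of Section~2 to reduce the essential self-adjointness question to a countable family of one-dimensional problems on $\mathbb{R}_+$ and then apply Weyl's limit-point/limit-circle theory. Concretely, I would invoke the unitary $\mathcal{W}_n\mathcal{F}_n$ (together with Lemmata~\ref{Lemma_alpha_dot_impuls} and~\ref{Lemma_Coulomb_Channel_Decomposition}) to identify $\tilde{D}_n(-\nu/|\cdot|\otimes\mathbb{I}_{\mathbb{C}^{2(n-1)}})$ with an orthogonal sum $\bigoplus_{j\in\mathfrak{T}_n}d_n^{\nu,j}$ of radial Dirac--Coulomb operators on $\mathsf{L}^2(\mathbb{R}_+;\mathbb{C}^2)$, each of the form $-\mathrm{i}\sigma_2\partial_r+(\text{angular constant}-\nu)r^{-1}+\sigma_3$. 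Under this identification $\mathsf{C}_0^\infty(\mathbb{R}^n\setminus\{0\};\mathbb{C}^{2(n-1)})$ corresponds to finite sums of $\mathsf{C}_0^\infty(\mathbb{R}_+;\mathbb{C}^2)$-vectors, and $\mathfrak{C}_n^\nu$ is augmented in finitely many channels by the radial projection of $\zeta_{n,m}^\nu$.

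The endpoint $+\infty$ is always in the limit-point case thanks to the mass term, so the question is localised at $r=0$. In each channel the indicial exponents of $d_n^{\nu,j}$ are $\pm\gamma_{n,j}^\nu$, with $\gamma_{2,k}^\nu=\sqrt{(k-1/2)^2-\nu^2}$ and $\gamma_{3,(l,m,s)}^\nu=\sqrt{\kappa^2-\nu^2}$, where $\kappa$ is the standard Dirac angular quantum number associated with $(l,s)$; Weyl's criterion yields the limit-point case at $0$ whenever $\gamma_{n,j}^\nu\geq 1/2$. A case check on the smallest possible $|k-1/2|$ and $|\kappa|$ shows that this holds in \emph{every} channel precisely when $n=2,\ \nu=0$ or $n=3,\ \nu\in[0,\sqrt{3}/2]$, matching the first branch of \eqref{operator_core}; there essential self-adjointness of $\tilde{D}_n$ on $\mathsf{C}_0^\infty(\mathbb{R}^n\setminus\{0\};\mathbb{C}^{2(n-1)})$ follows channel by channel. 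In the remaining cases the limit-circle channels are exactly the finitely many with minimal angular momentum, naturally indexed by $m\in\{-1/2,1/2\}^{n-1}$.

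In each such exceptional channel the defect has dimension one and is repaired by adjoining a single $\mathsf{L}^2$ solution of the radial equation with the correct boundary behaviour at $r=0$. I would verify that the radial projection of $\zeta_{n,m}^\nu$ plays exactly this role: the prefactor $r^{\gamma_n^\nu}$ (cut off by $\xi$) is the regular indicial solution, and the spinor weights chosen in \eqref{special_function_2d} and \eqref{special_function_3d} are tuned so that the $r^{-1}$ singularities produced by $-\mathrm{i}\boldsymbol{\sigma}\cdot\nabla$ (respectively $-\mathrm{i}\boldsymbol{\alpha}\cdot\nabla$) cancel those coming from the Coulomb potential, leaving $\tilde{D}_n^*\zeta_{n,m}^\nu\in\mathsf{L}^2$. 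A short computation of the limiting sesquilinear form at $r=0$ between $\zeta_{n,m}^\nu$ and itself (and, trivially, with $\mathsf{C}_0^\infty$ functions) then shows that $\tilde{D}_n^*|_{\mathfrak{C}_n^\nu}$ is symmetric, hence its closure is a self-adjoint extension of $\tilde{D}_n$; this closure automatically coincides with $D_n^\nu$ by uniqueness, since the regular indicial exponent is also the one selected by the positive-Schur-complement criterion of Esteban--Loss/Nenciu.

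The main obstacle will be the critical coupling $\nu=1/(4-n)$, where $\gamma_{n,j}^\nu=0$ in the exceptional channels, the two indicial solutions degenerate into one polynomial and one logarithmic solution, and Weyl's dichotomy has to be handled at the borderline; one must verify directly that $\zeta_{n,m}^\nu$ still selects the distinguished extension and that its span together with $\mathsf{C}_0^\infty(\mathbb{R}_+;\mathbb{C}^2)$ exhausts the full radial domain. A secondary technical point is the symmetry of $\tilde{D}_n^*|_{\mathfrak{C}_n^\nu}$ in the presence of the singular functions $\zeta_{n,m}^\nu$, which reduces to checking that the Wronskian-type boundary form at $r=0$ vanishes for two copies of the same regular indicial solution.
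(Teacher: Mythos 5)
Your plan is essentially the paper's proof: decompose into angular momentum channels, reduce to a family of one--dimensional radial Dirac--Coulomb operators on $\mathsf{L}^2(\mathbb{R}_+;\mathbb{C}^2)$, apply Weyl-type limit-point/limit-circle analysis at $r=0$ (the paper cites Weidmann's \emph{Oszillationsmethoden} for first-order systems, which is what you would in fact need rather than the scalar Sturm--Liouville version), observe that the mass $\sigma_3$ is a bounded perturbation and $+\infty$ is limit-point, identify the threshold $\kappa_j^2-\nu^2\ge 1/4$ (i.e.\ your $\gamma_{n,j}^\nu\ge 1/2$) as the limit-point condition at $0$, and in the finitely many limit-circle channels adjoin the cutoff regular indicial solution $\xi\varphi_{j,1}^\nu$, which is exactly the radial projection of $\zeta_{n,m}^\nu$. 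Your case check $n=2,\ \nu=0$ versus $n=3,\ \nu\in[0,\sqrt3/2]$ and your treatment of the critical coupling $\nu=1/(4-n)$ (logarithmic degeneracy) also match the paper.

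One concrete slip should be corrected: you propose to effect the decomposition via $\mathcal{W}_n\mathcal{F}_n$ together with Lemmata \ref{Lemma_alpha_dot_impuls} and \ref{Lemma_Coulomb_Channel_Decomposition}. Those are the \emph{momentum-space} tools, and under $\mathcal{W}_n\mathcal{F}_n$ the Coulomb potential becomes the nonlocal integral operator with kernel $Q_j$, not a multiplication operator, so no radial ODE appears. The transform you actually need — and the one the paper uses in \eqref{cannel_representation_adjoint} — is the position-space angular decomposition $\mathcal{W}_n\mathcal{M}_n$ with $\mathcal{M}_n=\diag(1,\mathrm{i})\otimes\mathbb{I}_{\mathbb{C}^{n-1}}$, under which $-\mathrm i\boldsymbol{\sigma}\cdot\nabla$ (resp.\ $-\mathrm i\boldsymbol{\alpha}\cdot\nabla$) becomes a direct sum of first-order ODEs in $r$ and $-\nu/|\mathbf{x}|$ stays multiplicative. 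Since the ODE you then write down (derivative, $\kappa_j/r$ off-diagonal, $-\nu/r$ diagonal, plus mass) is the position-space one, this is a labelling error rather than a conceptual one, but it would derail the write-up if left in. Finally, note that your last paragraph — checking that the closure equals $D_n^\nu$ — is not part of this lemma in the paper; the lemma only asserts essential self-adjointness, and the identification with $D_n^\nu$ is carried out afterwards via the Esteban--Loss form-domain characterisation and Lemma \ref{Lemma_abstract_quadratic_form_domain}.
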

\begin{proof}
For $m\in\mathfrak{T}_2$ and $(l,m,s)\in\mathfrak{T}_3$ we define 
\begin{align*}
              \kappa_{m}&:=m+1/2;\\
              \kappa_{(l,m,s)}&:=2sl+s+1/2.
\end{align*}
Furthermore we introduce for every $j\in\mathfrak{T}_n$ the operator $D^{j,\nu}$ in $\mathsf{L}^2(\mathbb{R}_{+};\mathbb{C}^2)$ 
by the differential expression 
\begin{align*}
              d^{j,\nu}:=\begin{pmatrix}
              -\frac{\nu}{r} & -\frac{\mathrm{d}}{\mathrm{d}r}-\frac{\kappa_j}{r} \\
              \frac{\mathrm{d}}{\mathrm{d}r}-\frac{\kappa_j}{r} & -\frac{\nu}{r}
              \end{pmatrix}
\end{align*}
on $\mathsf{C}^{\infty}_{0}(\mathbb{R}_{+};\mathbb{C}^2)$. Now we observe that any solution of the equation 
$d^{j,\nu}\varphi=0$ in $\mathbb{R}_{+}$ is a linear combination of the two functions 
\begin{align*}
        \varphi^{\nu}_{j,1}(r):=\begin{cases}
        \begin{pmatrix}
        1 \\ 0
        \end{pmatrix}r^{\kappa_j}\text{ if }\nu=0, \\
        \begin{pmatrix}
        \nu \\ \sqrt{\kappa_j^2 -\nu^2}-\kappa_j
        \end{pmatrix}
        r^{\sqrt{\kappa_j^2 - \nu^2}}\text{ else,}
        \end{cases}
\end{align*}
and 
\begin{align*}
        \varphi^{\nu}_{j,2}(r):=\begin{cases}
        \begin{pmatrix}
        0 \\ 1
        \end{pmatrix}r^{-\kappa_j}\text{ if }\nu=0, \\        
        \begin{pmatrix}
        \nu \\ -\sqrt{\kappa_j^2 -\nu^2}-\kappa_j
        \end{pmatrix}
        r^{-\sqrt{\kappa_j^2 - \nu^2}}\text{ if }0<\nu^2<\kappa_j^2,\\
        \begin{pmatrix}
        \nu\ln(r) \\ 1-\kappa_j \ln(r)
        \end{pmatrix}
        \text{ if }\nu^2=\kappa_j^2.
        \end{cases}
\end{align*}
Through the application of the results of \cite{Weidmann-Oszillationsmethoden} as in Section 2 in \cite{Morozov-Mueller-CLR} we 
obtain that the closure $D^{j,\nu}_{\mathrm{ex}}$ of the restriction of $(D^{j,\nu})^{*}$ to $\mathfrak{C^{j,\nu}}$ is self-adjoint with
\begin{align*}
           \mathfrak{C^{j,\nu}}:=
           \begin{cases}        
           \mathsf{C}^{\infty}_{0}(\mathbb{R}_{+};\mathbb{C}^2)\dot{+}\Span\{\xi \varphi_{j,1}^{\nu}\}\text{ if }
           \kappa_j^2-\nu^2< 1/4;\\ 
           \mathsf{C}^{\infty}_{0}(\mathbb{R}_{+};\mathbb{C}^2)\text{ else}.
           \end{cases}
\end{align*}
Here $\xi$ is a smooth cut-off function with  $\xi \in \mathsf{C}^{\infty}(\mathbb{R}_{+};\mathbb{R}_{+})$, $ \xi(t)=1 $ for $t\in (0,1)$ and $ \xi(t)=0 $ for $t>2$. Thus we conclude the claim by 
\begin{align}\label{cannel_representation_adjoint}
\big(\tilde{D}_n(-\nu/ |\cdot|\otimes \mathbb{I}_{\mathbb{C}^{2(n-1)}})\big)^{*}&=
\left(\mathcal{W}_n \mathcal{M}_n \right)^{*}
\left(\bigoplus\limits_{j\in\mathfrak{T}_n}\left(D^{j,\nu} +\sigma_3\right)^{*}\right)\mathcal{W}_n \mathcal{M}_n\text{ with,}\\
\mathcal{M}_{n}:&=\diag(1,\mathrm{i})\otimes\mathbb{I}_{\mathbb{C}^{n-1}} \nonumber
\end{align}
(see Section 7.3.3 in \cite{Thaller} for $n=2$ and Section 2.1 in \cite{Balinsky-Evans} for $n=3$) and the fact that $\sigma_3$ is 
a bounded operator in $\mathsf{L}^2(\mathbb{R}_{+};\mathbb{C}^2)$.
\end{proof}
\begin{rem}\label{Remark_operator_core_nenciu}
Let $\nu\in\big[0,1/(4-n)\big)$ and $j\in \mathfrak{T}_n$. By the embedding
\begin{align*}
            \mathsf{H}^{1/2}(\mathbb{R}^n)\subset \mathsf{L}^2(\mathbb{R}^n,(1+|\mathbf{x}|^{-1})\mathrm{d}\mathbf{x})
\end{align*}
and \eqref{cannel_representation_adjoint} we obtain that the domain of $\big(\mathcal{W}_n M_n D_n(-\nu/ |\cdot|\otimes \mathbb{I}_{\mathbb{C}^{2(n-1)}})\left(\mathcal{W}_n M_n \right)^{*}\big)_{j}$ is in $\mathsf{L}^2(\mathbb{R}_{+},(1+r^{-1})\mathrm{d}r)$. Hence there is a self-adjoint extension of $D^{j,\nu}$ with domain in $\mathsf{L}^2(\mathbb{R}_{+},(1+r^{-1})\mathrm{d}r)$. By $\xi \varphi_{j,2}^{\nu}\notin \mathsf{L}^2(\mathbb{R}_{+},(1+r^{-1})\mathrm{d}r)$ for $\nu>0$ and Theorem 1.5 in \cite{Weidmann-Oszillationsmethoden} we get that $D^{j,\nu}_{\mathrm{ex}}$ is the unique self-adjoint 
extension of $D^{j,\nu}$ with domain in $\mathsf{L}^2(\mathbb{R}_{+},(1+r^{-1})\mathrm{d}r)$. Therefore, we obtain 
\begin{align*}
               \big(\mathcal{W}_n M_n D_n(-\nu/ |\cdot|\otimes \mathbb{I}_{\mathbb{C}^{2(n-1)}})\left(\mathcal{W}_n M_n \right)^{*}\big)_{j}=D^{j,\nu}_{\mathrm{ex}}.
\end{align*}
We conclude that the closure of $\big(\tilde{D}_n(-\nu/ |\cdot|\otimes \mathbb{I}_{\mathbb{C}^{2(n-1)}})\big)^{*}$ restricted to $\mathfrak{C}_{n}^{\nu}$ is $D_n(-\nu/ |\cdot|\otimes \mathbb{I}_{\mathbb{C}^{2(n-1)}})$.
\end{rem}
As a consequence of Lemma \ref{Lemma_essential_self_adjoint} it remains to prove that $\zeta_{n,m}^{\nu}\in \mathfrak{D}\big(D_n^{\nu}\big)$ for $m\in \{-1/2,1/2\}^{n-1}$ and $(n,\nu)\in \big(\{2\}\times (0,1/2]\big)\cup 
\big(\{3\}\times (\sqrt{3}/2,1]\big) $. We introduce the symmetric and non-negative (by Corollary \ref{Corollary_Dirac_Coulomb_Hardy}) quadratic form $  
\mathtt{q}^{\nu}_{n} $ on  $ \mathsf{C}^{\infty}_{0}(\mathbb{R}^n\setminus\{0\};\mathbb{C}^{n-1}) $ by 
\begin{align*}
          \mathtt{q}^{\nu}_n[\varphi]:=\int\limits_{\mathbb{R}^n}&\bigg(
          \frac{|K_n \varphi|^2}{1+\sqrt{1-\big((4-n)\nu\big)^2}+ 
          \frac{\nu}{|\mathbf{x}|}}+\\
          &\bigg(1-\sqrt{1-\big((4-n)\nu\big)^2}-\frac{\nu}{|\mathbf{x}|}\bigg)
          |\varphi|^2\bigg)\mathrm{d}\mathbf{x}.
\end{align*}
Note that $  \mathtt{q}^{\nu}_n $ is closable by Theorem X.23 in \cite{Simon-Reed-2}. We denote the domain of the closure of $\mathtt{q}^{\nu}_n$ by $\mathfrak{Q}^{\nu}_n$.\\
By the characterisation of $ \mathfrak{D}\big(D_n^{\nu}\big) $ in Theorem 1 in \cite{Esteban-Loss-abstract}, it is enough to show that for all $m\in \{-1/2,1/2\}^{n-1}$ the upper $(n-1)$ spinor of $\zeta_{n,m}^{\nu}$ is in $\mathfrak{Q}^{\nu}_n$, i.e.,
$ \varsigma_{n,m}^{\nu}\in \mathfrak{Q}^{\nu}_n$ with $ \varsigma_{2,m}^{\nu} $ given in polar coordinates by
\begin{align*}
  \varsigma_{2,m}^{\nu}(\rho,\vartheta)&:= \xi(\rho)\rho^{\sqrt{1/4-\nu^2}-1/2} 
            \mathrm{e}^{-\mathrm{i}(m+1/2)\vartheta};
\end{align*}
and $\varsigma_{3,m}^{\nu}$  in spherical coordinates by
\begin{align*}
         \varsigma_{3,m}^{\nu}(r,\theta,\phi):=\xi(r)r^{\sqrt{1-\nu^2}-1}\Omega_{1/2+m_2,m_1,-m_2}(\theta,\phi).
\end{align*}
We achieve this goal by the application of the following abstract lemma
\begin{lem}\label{Lemma_abstract_quadratic_form_domain}
Let $\mathtt{q}$ be a closable and non-negative quadratic form on a dense linear subspace $\mathfrak{Q}$ of the Hilbert space $\mathfrak{H}$ and $\psi\in\mathfrak{H}$. If there is a sequence 
$(\psi_{n})_{n\in\mathbb{N}}\subset\mathfrak{Q}$ with $\sup\limits_{n\in\mathbb{N}}\mathtt{q}[\psi_n]<\infty$ which converges weakly in $\mathfrak{H}$ to $\psi$, then $\psi$ is in the domain of the closure of $\mathtt{q}$.
\end{lem}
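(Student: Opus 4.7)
The plan is to pass to the Hilbert space obtained from the closure of $\mathtt{q}$ and to exploit its reflexivity. Closability together with non-negativity ensures that the closure $\bar{\mathtt{q}}$ has a domain $\mathfrak{Q}^{*}$ on which $\langle\phi,\eta\rangle_{*}:=\bar{\mathtt{q}}[\phi,\eta]+\langle\phi,\eta\rangle_{\mathfrak{H}}$ defines an inner product turning $\mathfrak{Q}^{*}$ into a Hilbert space, and the inclusion $\mathfrak{Q}^{*}\hookrightarrow\mathfrak{H}$ is a contraction. The goal is to show $\psi\in\mathfrak{Q}^{*}$.

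First I would verify that the hypotheses make $(\psi_n)$ bounded in $\mathfrak{Q}^{*}$: weak convergence in $\mathfrak{H}$ combined with the uniform boundedness principle yields $\sup_n\|\psi_n\|_{\mathfrak{H}}<\infty$, and together with $\sup_n\mathtt{q}[\psi_n]<\infty$ this gives $\sup_n\|\psi_n\|_{*}<\infty$. Reflexivity of the Hilbert space $\mathfrak{Q}^{*}$ then extracts a subsequence $(\psi_{n_k})$ converging weakly in $\mathfrak{Q}^{*}$ to some $\phi\in\mathfrak{Q}^{*}$. Because the embedding into $\mathfrak{H}$ is continuous, every $\ell\in\mathfrak{H}^{*}$ restricts to a continuous linear functional on $\mathfrak{Q}^{*}$; consequently $(\psi_{n_k})$ also converges weakly to $\phi$ in $\mathfrak{H}$, and uniqueness of weak limits forces $\phi=\psi$, so $\psi\in\mathfrak{D}(\bar{\mathtt{q}})$.

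I do not expect a serious obstacle here. The only point that requires some care is verifying that closability and non-negativity really do make $\bar{\mathtt{q}}[\cdot,\cdot]+\langle\cdot,\cdot\rangle_{\mathfrak{H}}$ a complete inner product on $\mathfrak{Q}^{*}$, but this is essentially the definition of a closed non-negative form. A variant that sidesteps reflexivity would be to apply Mazur's lemma inside $\mathfrak{Q}^{*}$ to the bounded sequence $(\psi_n)$, producing convex combinations that are Cauchy in the form norm and hence converge in $\mathfrak{Q}^{*}$; their $\mathfrak{H}$-limit must coincide with the weak limit $\psi$, yielding $\psi\in\mathfrak{D}(\bar{\mathtt{q}})$ directly from closedness.
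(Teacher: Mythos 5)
Your main argument is correct, and it follows the same underlying strategy as the paper (bounded sequence, weakly convergent subsequence, identify the limit via the $\mathfrak{H}$-weak limit, conclude by closedness), but it implements that strategy differently. The paper invokes the square root $B$ of the operator associated to $\bar{\mathtt{q}}$ (so that $\bar{\mathtt{q}}[\varphi]=\|B\varphi\|^2$), extracts a weak limit of $B\psi_{n_k}$ in $\mathfrak{H}$ by Banach--Alaoglu, and then appeals to the fact that the graph of $B$, being a closed subspace, is weakly closed. You instead work directly in the form-norm Hilbert space $\mathfrak{Q}^{*}=\mathfrak{D}(\bar{\mathtt{q}})$ with inner product $\bar{\mathtt{q}}[\cdot,\cdot]+\langle\cdot,\cdot\rangle_{\mathfrak{H}}$: the sequence is bounded there (weak convergence in $\mathfrak{H}$ gives $\sup_n\|\psi_n\|_{\mathfrak{H}}<\infty$ via uniform boundedness), a subsequence converges weakly in $\mathfrak{Q}^{*}$, and continuity of the embedding $\mathfrak{Q}^{*}\hookrightarrow\mathfrak{H}$ plus uniqueness of weak limits forces the $\mathfrak{Q}^{*}$-weak limit to equal $\psi$. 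This is a cleaner route in that it avoids introducing the operator $B$ and citing a separate "closed convex implies weakly closed" fact; what one loses is that the argument is now tied specifically to the form-norm construction rather than to a general closed-graph statement. One small caveat on your offered variant: Mazur's lemma does not really "sidestep reflexivity" here, since to apply it inside $\mathfrak{Q}^{*}$ you first need a weakly $\mathfrak{Q}^{*}$-convergent subsequence (Mazur presupposes weak convergence, not merely boundedness), which is exactly the reflexivity/Banach--Alaoglu step; the variant is therefore a reformulation, not a genuinely independent path.
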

\begin{proof}
We denote by $\overline{\mathtt{q}}$ the closure of $\mathtt{q}$ and by $\overline{\mathfrak{Q}}$ the domain of $\overline{\mathtt{q}}$. There is a unique self-adjoint operator $B: \overline{\mathfrak{Q}} \rightarrow \mathfrak{H}$ with 
\begin{align*}
            \overline{\mathtt{q}}[\varphi]= \| B \varphi \|^2\text{ for all }\varphi \in \overline{\mathfrak{Q}}
\end{align*}
by Theorem 2.13 in \cite{Teschl} ($B^2$ corresponds to $A$ there). Thus we know that 
\begin{align*}
         \sup\limits_{n\in\mathbb{N}}\|B \psi_n\|^2<\infty.
\end{align*}
Hence there is a $\Psi\in \mathfrak{H}$ and a subsequence $(B \psi_{n_{k}})_{n_k\in\mathbb{N}}$ of $(B\psi_{n})_{n\in\mathbb{N}}\subset \mathfrak{H}$  that  converges weakly to $\Psi$ by the Banach-Alaoglu Theorem.  This implies that $\big((\psi_{n_k},B\psi_{n_k})\big)_{n_k\in\mathbb{N}}$ converges weakly to $(\psi,\Psi)\in \mathfrak{H}\oplus\mathfrak{H}$. By the closedness of the graph of $B$ and Theorem 8 in 
Chapter 1 of \cite{Cheney} we deduce the claim.
\end{proof}
Now we pick $\upsilon \in \mathsf{C}_{0}^{\infty}(\mathbb{R}_{+})$ such that $\upsilon(r)=\xi(r)$ for all $r\in [1,\infty)$ and $0\leq \upsilon(r)\leq 1$ for $r\in (0,1)$. 
Let $k\in\mathbb{N}$. We define 
\begin{align*}
           \upsilon_{k}(r):=\begin{cases}
                            \upsilon(kr) &\text{ if }r\in(0,1/k]; \\
                             1           &\text{ if }r\in(1/k,1]; \\ 
                             \xi(r)      &\text{ else };
                            \end{cases}                            
\end{align*}
and the function $\varsigma_{2,m,k}^{\nu}$ in the polar coordinates by 
\begin{align*}
  \varsigma_{2,m,k}^{\nu}(\rho,\vartheta)&:= \upsilon_{k}(\rho)
  \rho^{\sqrt{1/4-\nu^2}-1/2} \mathrm{e}^{-\mathrm{i}(m+1/2)\vartheta},
\end{align*}
and $\varsigma_{3,m,k}^{\nu}$  in the spherical coordinates by
\begin{align*}
         \varsigma_{3,m,k}^{\nu}(r,\theta,\phi):=\upsilon_{k}(r)r^{\sqrt{1-\nu^2}-1}
         \Omega_{1/2+m_2,m_1,-m_2}(\theta,\phi).
\end{align*}
The sequence $(\varsigma_{n,m,k}^{\nu})_{k\in\mathbb{N}}$ converges to 
$\varsigma_{n,m}^{\nu}$ in $\mathsf{L}^2(\mathbb{R}^n;\mathbb{C}^{n-1})$. By Lemma 
\ref{Lemma_abstract_quadratic_form_domain} it is thus enough to prove that 
\begin{align} \label{final_need_1}
 \sup\limits_{k\in\mathbb{N}}\mathtt{q}^{\nu}_{n}[\varsigma_{n,m,k}^{\nu}]<\infty.
\end{align}
Let $\varphi \in \mathsf{C}_{0}^{\infty}(\mathbb{R}^n\setminus\{0\};\mathbb{C}^{n-1})$. At first we observe that 
 \begin{align} \label{final_need_2}
            \mathtt{q}^{\nu}_{n} [\varphi]\leq \int_{\mathbb{R}^n}\bigg(\frac{|x|}{\nu} 
            |K_n \varphi|^2
            -\frac{\nu}{|x|}|\varphi|^2+|\varphi|^2\bigg)\mathrm{d}\mathbf{x}.
 \end{align}
A tedious calculation shows
\begin{align} \label{kinetic_angular_momentum_equation}
   K_n=\begin{cases}-\mathrm{i}\mathrm{e}^{i\vartheta}(\partial_\varrho-\frac{1}{\rho}A_2)\text{ with }
  A_2:= - \mathrm{i}\partial_{\vartheta}
   \text{ if }n=2; \\
   -\mathrm{i}\left( \boldsymbol{\sigma} \cdot \frac{x}{\left | x\right |}\right)\left(\partial_r - \frac{1}{r} A_3\right)\text{ with }A_3:=
\boldsymbol{\sigma}\cdot\big(-\mathrm{i}\mathbf{x}
\wedge \nabla\big)\text{ if }n=3.
   \end{cases}
\end{align}
Using \eqref{kinetic_angular_momentum_equation} and integration by parts we obtain  that the right hand side of \eqref{final_need_2} is equal to
\begin{align}\label{final_need_3}
       \int\limits_{\mathbb{R}^n} \left(\frac{|\mathbf{x}|}{\nu} \left | \partial_{|\mathbf{x}|} \varphi
       \right |^2 + \frac{1}{\nu |\mathbf{x}|}\left |(1/(4-n)+A_n)\varphi \right |^2-
       \frac{\bigg(\nu+\frac{1}{(4-n)^2\nu}\bigg)}{|\mathbf{x}|}\left | \varphi \right |^2
       +|\varphi|^2\right)
       \mathrm{d}\mathbf{x}.
\end{align}
By \eqref{final_need_3} and Relation 2.1.37 in \cite{Balinsky-Evans} we obtain
\begin{align}
&\int_{\mathbb{R}^n}\bigg(\frac{|x|}{\nu} 
            |K_n\varsigma_{n,m,k}^{\nu}|^2
            -\frac{\nu}{|x|}| \varsigma_{n,m,k}^{\nu}|^2+| \varsigma_{n,m,k}^{\nu}|^2\bigg)\mathrm{d}\mathbf{x} \nonumber\\ \label{final_need_4}
            =&\int\limits_{0}^{\infty} \bigg(\frac{t^n}{\nu} \left | \partial_{t} 
            \upsilon_{k}(t)t^{\sqrt{(4-n)^{-2}-\nu^2}-(4-n)^{-1}} \right |^2 \\
             &-\nu
            \upsilon_{k}(t)^2t^{2\sqrt{(4-n)^{-2}-\nu^2}-1}
            +\upsilon_{k}(t)^2t^{2\sqrt{(4-n)^{-2}-\nu^2}}\bigg)\mathrm{d}t.
            \nonumber
\end{align}
A straightforward calculation shows that \eqref{final_need_4} is equal to 
\begin{equation}\begin{split} \label{final_need_5}
&\int\limits_{0}^{\infty} \left(\nu^{-1} \upsilon_{k}'(t)^2 t^{2\sqrt{(4-n)^{-2}-\nu^2}+1} +\upsilon_{k}(t)^2 t^{2\sqrt{(4-n)^{-2}-\nu^2}}\right)\mathrm{d}t\\
=&\int\limits_{0}^{1/k} \nu^{-1} k^2 \upsilon'(kt)^2 t^{2\sqrt{(4-n)^{-2}-\nu^2}+1}
\mathrm{d}t
+\int\limits_{1}^{\infty} \nu^{-1}\upsilon'(t)^2 r^{2\sqrt{(4-n)^{-2}-\nu^2}+1}\mathrm{d}t\\
+&\int\limits_{0}^{\infty}  \upsilon_k(t)^2 t^{2\sqrt{(4-n)^{-2}-\nu^2}}\mathrm{d}t.
\end{split}
\end{equation}
An upper bound for the expression in \eqref{final_need_5} is
\begin{align} \label{final_need_6}
\int\limits_{0}^{\infty} \nu^{-1}\upsilon'(t)^2 t^{2\sqrt{(4-n)^{-2}-\nu^2}+1}\mathrm{d}t
+\int\limits_{0}^{\infty} \xi(t)^2 t^{2\sqrt{(4-n)^{-2}-\nu^2}}\mathrm{d}t.
\end{align}
The combination of \eqref{final_need_6}, \eqref{final_need_5} \eqref{final_need_4} and \eqref{final_need_2} implies \eqref{final_need_1}.


\begin{thebibliography}{10}

\bibitem{Balinsky-Evans}
Alexander~A. Balinsky and William~D. Evans.
\newblock {\em {Spectral analysis of relativistic operators}}.
\newblock {I}mperial {C}ollege {P}ress, 2011.

\bibitem{Bouzouina}
Abdelkader Bouzouina.
\newblock {Stability of the two-dimensional {B}rown-{R}avenhall operator}.
\newblock {\em {P}roceedings of the {R}oyal {S}ociety of {E}dinburgh: {S}ection
  {A} {M}athematics}, 132(05):1133--1144, 2002.

\bibitem{Cheney}
Ward Cheney.
\newblock {\em {Analysis for applied mathematics}}, volume 208.
\newblock Springer Science \& Business Media, 2013.

\bibitem{Cuenin-Siedentop}
{J}ean-{C}laude Cuenin and {H}einz {S}iedentop.
\newblock {{D}ipoles in graphene have infinitely many bound states}.
\newblock {\em {J}ournal of {M}athematical {P}hysics}, 55(12), 2014.

\bibitem{Dolbeault-Esteban-Loss-Vega}
Jean Dolbeault, Maria~J. Esteban, Michael Loss, and Luis Vega.
\newblock {An analytical proof of {H}ardy-like inequalities related to the
  {D}irac operator}.
\newblock {\em {J}ournal of {F}unctional {A}nalysis}, 216(1):1--21, 2004.

\bibitem{Dolbeault-Esteban-Sere}
Jean Dolbeault, Maria~J. Esteban, and Eric S{\'e}r{\'e}.
\newblock {On the eigenvalues of operators with gaps. {A}pplication to {D}irac
  operators}.
\newblock {\em {J}ournal of {F}unctional {A}nalysis}, 174(1):208--226, 2000.

\bibitem{Dong-Zong}
Shi-Hai Dong and Zhong-Qi Ma.
\newblock {Exact solutions to the {D}irac equation with a {C}oulomb potential
  in 2+1 dimensions}.
\newblock {\em Physics {L}etters {A}}, 312(1):78--83, 2003.

\bibitem{Esteban-Loss-abstract}
Maria~J. Esteban and Michael Loss.
\newblock {Self-adjointness via partial {H}ardy-like inequalities}.
\newblock In {\em {Mathematical results in quantum mechanics}}, pages 41--47.
  World Sci. Publ., Hackensack, NJ, 2008.

\bibitem{Esteban-Sere}
Maria~J. Esteban and Eric S{\'e}r{\'e}.
\newblock {Existence and multiplicity of solutions for linear and nonlinear
  {D}irac problems}.
\newblock In {\em {{P}artial {D}ifferential {E}quations and their
  {A}pplications}}, volume~12 of {\em {{C}{R}{M} {P}roceedings and {L}ecture
  {N}otes}}, pages 107--118. {A}merican {M}athematical {S}ociety, 1997.

\bibitem{Evans-Perry-Siedentop}
William~D. Evans, Peter Perry, and Heinz Siedentop.
\newblock {The spectrum of relativistic one-electron atoms according to {B}ethe
  and {S}alpeter}.
\newblock {\em {C}ommunications in {M}athematical {P}hysics}, 178(3):733--746,
  1996.

\bibitem{Herbst}
Ira~W. Herbst.
\newblock {Spectral theory of the operator $(p^2+m^2)^{1/2}- {Z}e^2/r$}.
\newblock {\em {C}ommunications in {M}athematical {P}hysics}, 53(3):285--294,
  1977.

\bibitem{Kato}
Tosio Kato.
\newblock {Holomorphic families of {D}irac operators}.
\newblock {\em Mathematische {Z}eitschrift}, 183(3):399--406, 1983.

\bibitem{Morozov-Mueller}
Sergey Morozov and David M{\"u}ller.
\newblock {On the minimax principle for {C}oulomb-{D}irac operators}.
\newblock {\em Mathematische {Z}eitschrift}, 280:733--747, 2015.

\bibitem{Morozov-Mueller-CLR}
Sergey Morozov and David M{\"u}ller.
\newblock {{L}ieb-{T}hirring and {C}wickel-{L}ieb-{R}ozenblum inequalities for
  perturbed graphene with a {C}oulomb impurity}.
\newblock {\em Preprint}, 2016.

\bibitem{Nenciu}
Gheorghe Nenciu.
\newblock {{S}elf-adjointness and invariance of the essential spectrum for
  {D}irac operators defined as quadratic forms}.
\newblock {\em {C}ommunications in {M}athematical {P}hysics}, 48(3):235--247,
  1976.

\bibitem{Simon-Reed-2}
{M}ichael {R}eed and {B}arry {S}imon.
\newblock {\em {{M}ethods of modern mathematical physics II: {F}ourier
  analysis, self-adjointness}}, volume~2.
\newblock {A}cademic {P}ress, 1975.

\bibitem{Talman}
James~D. Talman.
\newblock {Minimax principle for the {D}irac equation}.
\newblock {\em {P}hysical {R}eview {L}etters}, 57(9):1091--1094, 1986.

\bibitem{Teschl}
Gerald Teschl.
\newblock {\em {Mathematical methods in quantum mechanics}}, volume~99.
\newblock American {M}athematical {S}ociety, 2009.

\bibitem{Thaller}
Bernd Thaller.
\newblock {\em {The {D}irac equation}}.
\newblock {S}pringer-{V}erlag, Berlin, 1992.

\bibitem{Weidmann-Oszillationsmethoden}
Joachim Weidmann.
\newblock {Oszillationsmethoden f{\"u}r {S}ysteme gew{\"o}hnlicher
  {D}ifferentialgleichungen}.
\newblock {\em {M}athematische {Z}eitschrift}, 119:349--373, 1971.

\bibitem{Whittaker-Watson}
Edmund~T. Whittaker and George~N. Watson.
\newblock {\em {A course of modern analysis}}.
\newblock {C}ambridge {U}niversity {P}ress, 1996.

\end{thebibliography}
\end{document}